\PassOptionsToPackage{pagebackref}{hyperref}
\documentclass[a4paper,
               DIV=14,
               11pt,
               abstract=true,
               titlepage=off]{scrartcl}

\usepackage{amsmath,amssymb,amsthm,mathtools,mathrsfs,thm-restate}
\usepackage{microtype}
\usepackage[notes=true,later=true]{dtrt}
\usepackage{hyperref}
\usepackage{physics}
\usepackage{braket}
\usepackage{graphics,xcolor}
\usepackage[capitalize,nameinlink]{cleveref}
\usepackage{url}
\usepackage{authblk}

\makeatletter
\def\@footnotecolor{red}
\define@key{Hyp}{footnotecolor}{%
 \HyColor@HyperrefColor{#1}\@footnotecolor%
}
\def\@footnotemark{%
    \leavevmode
    \ifhmode\edef\@x@sf{\the\spacefactor}\nobreak\fi
    \stepcounter{Hfootnote}%
    \global\let\Hy@saved@currentHref\@currentHref
    \hyper@makecurrent{Hfootnote}%
    \global\let\Hy@footnote@currentHref\@currentHref
    \global\let\@currentHref\Hy@saved@currentHref
    \hyper@linkstart{footnote}{\Hy@footnote@currentHref}%
    \@makefnmark
    \hyper@linkend
    \ifhmode\spacefactor\@x@sf\fi
    \relax
  }%
\makeatother

\hypersetup{
  linkcolor  = blue!80!black,
  citecolor  = green!50!black,
  footnotecolor = red!90!black,
  colorlinks = true,
}

\newtheorem{introtheorem}{Theorem}
\newtheorem{theorem}{Theorem}[section]
\newtheorem{lemma}[theorem]{Lemma}
\newtheorem{corollary}[theorem]{Corollary}

\theoremstyle{definition}
\newtheorem{remark}[theorem]{Remark}
\newtheorem{definition}[theorem]{Definition}

\newcommand{\Field}{\mathbb F}
\newcommand{\Complex}{\mathbb C}
\newcommand{\dimension}{d} 
\newcommand{\qudits}{n}
\newcommand{\level}{k}
\newcommand{\Dimension}{D} 
\newcommand{\Group}{\mathcal G}
\newcommand{\AnotherGroup}{\mathcal H}
\newcommand{\Unitaries}{\mathcal{U}}
\newcommand{\RootOfUnity}{\omega}
\newcommand{\Paulis}{\mathcal P}
\newcommand{\Cliffords}{\mathcal C}
\newcommand{\Algebra}{\mathcal A}
\newcommand{\Lagrangians}{\mathscr L}
\newcommand{\Stabilizer}{\mathcal S}
\renewcommand{\Subset}{\mathcal{T}}
\newcommand{\ConjugationGroup}{\Gamma}
\newcommand{\MatrixSubspace}{\mathcal W}
\newcommand{\FixedLagrangians}{\mathscr F}

\newcommand{\lspan}{\operatorname{span}}

\newcommand{\PauliLabelSpace}{\Field_\dimension^{2\qudits}}

\title{The Generalized Semi-Clifford Conjecture Holds at Level 4}
\author[1]{Maxwell Marcus\thanks{\texttt{mm1568@princeton.edu}}}
\author[2]{Sathyawageeswar Subramanian\thanks{\texttt{Sathya.Subramanian@cs.ox.ac.uk}}}
\author[1]{Marcel Dall'Agnol\thanks{\texttt{dallagnol@cs.princeton.edu}}}
\affil[1]{\small \textit{Department of Computer Science, Princeton University, 35 Olden Street,
Princeton, NJ 08544
}}

\affil[2]{\small \textit{Department of Computer Science, University of Oxford, Parks Rd, Oxford OX1 3QG, United Kingdom}}
\date{}

\begin{document}
\maketitle

\begin{abstract}
The Clifford hierarchy $\mathcal{C}_1 \subset \mathcal{C}_2 \subset \cdots$ was introduced by Gottesman and Chuang (Nature 402, 1999) to characterize gates that admit fault-tolerant implementation by gate teleportation. Yet, despite its rich mathematical structure and the attention it has received in recent years, little is known about $\mathcal{C}_k$ for $k > 3$. Most progress has focused on identifying structural properties of restrictions of the hierarchy, such as diagonal gates and gates on systems of small dimension $d$ or with few qudits.

The \emph{generalized semi-Clifford conjecture}, proposed by Zeng et al.\ (Phys.\ Rev.\ A 77, 2008), states that every gate in $\cup_k \mathcal{C}_k$ is, up to multiplication by Cliffords, the product of a permutation and a diagonal matrix. Beigi and Shor proved the case $d=2$, $k=3$ (Quantum Inf.\ Comput.\ 10, 2010) and Pllaha et al.\ found an alternative proof by exploiting fixed points of the conjugation map induced by (a Clifford correction of) $U \in \mathcal{C}_3$ on the span of maximal stabilizer subgroups (Quantum 4, 2020). By extending their fixed-point arguments to the group $\ConjugationGroup_1(U)$ generated by $U \mathcal{P} U^\dagger$ and beyond, we prove the conjecture for $k \leq 4$ and any prime dimension $d$.

Our proof centers on \emph{conjugation groups} $\Gamma_1(U), \Gamma_2(U), \ldots$ of $U \in \mathcal{C}_k$, which we expect to be a useful tool in the study of the Clifford hierarchy more generally. We also show a natural sufficient condition on such groups for gates in higher levels to be generalized semi-Clifford. 
\end{abstract}

\clearpage
\tableofcontents
\clearpage

\section{Introduction}
Motivated by universal fault-tolerant quantum computing, Gottesman and Chuang \cite{GottesmanChuang1999} proposed \emph{gate teleportation} as a viable route towards this goal. In gate teleportation protocols, a gate is applied by consuming a pre-prepared resource state and correcting the measurement outcome with a gate drawn from one level lower in the nested sequence of unitaries
\[
\Cliffords_1 \subset \Cliffords_2 \subset \cdots \subset \Cliffords_\level \subset \cdots,
\]
known as the \emph{Clifford hierarchy}. Here $\Cliffords_1 = \Paulis$ is the Pauli group, $\Cliffords_2$ the Clifford group, $\Cliffords_\level$ consists of the unitaries that conjugate $\Paulis$ into $\Cliffords_{\level-1}$, and the level of a gate measures, roughly, the resources its teleportation protocol consumes \cite{ZhouLeungChuang2000}.

The Clifford hierarchy and its rich mathematical structure have since become objects of study in their own right. In the effort to better understand the structure of $\Cliffords_\level$ or restricted classes thereof (and characterize gates that require fewer resources to teleport), \cite{ZengChenChuang2008} defined \emph{semi-Clifford} and \emph{generalized semi-Clifford} gates. The former consist of those which conjugate some maximal abelian subgroup of $\Paulis$ onto another; and the latter, of those which map the \emph{span} of a maximal abelian subgroup (i.e., the matrix algebra it generates) onto the span of another.\footnote{These conditions are equivalent to semi-Cliffords being diagonal and generalized semi-Cliffords being the product of a permutation by a diagonal matrix, both up to left and right multiplication by $\Cliffords_2$.}

\paragraph{The (generalized) semi-Clifford conjecture.} \cite{ZengChenChuang2008} conjectured for all $(\qudits, \dimension, \level)$ that all $\qudits$-qudit unitaries over dimension-$\dimension$ qudits in $\Cliffords_\level$ are generalized semi-Clifford,\footnote{Strictly speaking, their conjecture was restricted to $\dimension = 2$, but naturally generalizes to all $\dimension$.} and semi-Clifford if $\level = 3$. They proved the entire hierarchy on up to two qubits ($\dimension = 2$,  $\qudits \leq 2$, all $\level$) is semi-Clifford, as is the third level on three qubits ($\dimension = 2$, $\qudits = \level = 3$), and observed that non-semi-Clifford gates exist with $\qudits > 2$ qubits at level $\level > 3$. A string of subsequent work \cite{deSilva2021,ChenDeSilva2024,deSilvaLautsch2025,AndersonConnelly2025} proved the semi-Clifford property for one prime-dimensional qudit ($\dimension$ prime, $\qudits = 1$, all $\level$) and, in the third level, up to two prime-dimensional qudits ($\dimension$ prime, $\qudits \leq 2$, $\level = 3$) as well as four qubits ($\dimension = 2$, $\qudits = 4$, $\level = 3$).

Non-semi-Clifford gates are known to exist in $\Cliffords_3$ (Gottesman and Mochon, reported in \cite{BeigiShor2010}) and recent work by de Silva and Lautsch also constructs non-generalized semi-Clifford gates at the fifth level \cite{desilva2026}. 
However, there have been no positive results on the generalized semi-Clifford conjecture (other than those implied by semi-Clifford characterizations) since Beigi and Shor's \cite{BeigiShor2010} proof of the case of third-level gates on qubits ($\dimension = 2$, $\level = 3$ and any $\qudits$). 

\paragraph{Our results.} In this paper, we settle the generalized semi-Clifford conjecture for fourth-level gates in prime dimension (all $\qudits$, $\dimension$ prime, $\level = 4$). Our ideas build on Pllaha, Rengaswamy, Tirkkonen, and Calderbank's \cite{PllahaEtAl2020} alternative proof for qubit gates with $\level = 3$ by analyzing the Pauli expansion of Clifford and $\Cliffords_3$ unitaries. Their analysis implies a fixed-point property of the conjugation of Paulis and subgroups thereof by (Clifford corrections of) $U \in \Cliffords_3$. Specifically, they show that, for some $C \in \Cliffords_2$, some $P \in \Paulis$ is fixed by conjugation under $CU$; this in turn implies that $CU$'s Pauli expansion is supported on a maximal commuting subgroup $\Stabilizer$ of $\Paulis$, and thus that $U$ conjugates the span of $\Stabilizer$ into the span of $C^\dagger \Stabilizer C$ (another maximal commuting subgroup).

We sidestep Pauli expansions and instead analyze the action of the \emph{conjugation groups} $\ConjugationGroup_1(U) \coloneqq U \Paulis U^\dagger \cong \Paulis$ (generated by the conjugate tuple corresponding to $U$, in the language of \cite{deSilva2021}) and $\ConjugationGroup_2(U) \coloneqq \langle V \Paulis V^\dagger : V \in U \Paulis U^\dagger\rangle$ over (subsets of) $\Lagrangians$, \emph{the set of Lagrangian subspaces of} $\PauliLabelSpace$ (equivalently, of maximal commuting subgroups of $\Paulis$). Showing that $\ConjugationGroup_2(U)$ is a $\dimension$-group and $\dimension$ does not divide $\abs{\Lagrangians}$ implies the action has a fixed point, and allows us to show that $U$ is generalized semi-Clifford.

\begin{introtheorem}
    \label{thm:level4}
    Let $\qudits \in \mathbb{N}$, $\level\leq 4$ and $\dimension$ be prime. Every $\qudits$-qudit gate in $\Cliffords_\level$ over qudits of dimension $\dimension$ is generalized semi-Clifford. 
\end{introtheorem}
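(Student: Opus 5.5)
We plan to follow the strategy sketched in the introduction: find a Lagrangian $L \in \Lagrangians$ such that $U$ maps the span of the maximal commuting subgroup $\Stabilizer_L$ onto the span of another maximal commuting subgroup. This is equivalent to $U$ being generalized semi-Clifford. The levels $\level \le 2$ are trivial, since Cliffords map stabilizer groups to stabilizer groups. The case $\level = 3$ will come out of the same argument one level down. So we concentrate on $U \in \Cliffords_4$.

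\textbf{Step 1: an action on Lagrangians.} For $V \in \Cliffords_3$, define $V$ to act on $\Lagrangians$ whenever $V$ conjugates each $\lspan \Stabilizer_L$ into some $\lspan \Stabilizer_{L'}$. We first verify this for elements of $\ConjugationGroup_1(U) = U\Paulis U^\dagger \subseteq \Cliffords_3$. If $V = UPU^\dagger$, then $V Q V^\dagger \in \Cliffords_2$ for every Pauli $Q$. Moreover, the Cliffords $\{VQV^\dagger : Q \in \Stabilizer_L\}$ commute up to phase, and they generate an abelian algebra of dimension $\dimension^\qudits$.

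The key sub-claim is that $\ConjugationGroup_2(U) = \langle V\Paulis V^\dagger : V \in \ConjugationGroup_1(U)\rangle$ acts on the set $\FixedLagrangians$ of Lagrangians whose spans are permuted. To get this, we would compute commutators. For $V = UPU^\dagger$ and $Q$ Pauli, we have $VQV^\dagger Q^\dagger = U(P\,U^\dagger Q U\,P^\dagger U^\dagger Q^\dagger U)U^\dagger$, and $U^\dagger Q U \in \Cliffords_3$. Using the group-commutator identities for $\Cliffords_3$ elements against Paulis, we would show that these generators are Cliffords whose symplectic parts are unipotent.

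\textbf{Step 2: $\ConjugationGroup_2(U)$ is a $\dimension$-group.} The Clifford images of generators $VQV^\dagger$ lie in $\Cliffords_2/\Paulis \cong Sp(2\qudits,\dimension)$ (up to the phase/Pauli kernel, which is itself a $\dimension$-group for odd $\dimension$, and a $2$-group for $\dimension=2$). Each generator has symplectic part $I + N$ with $N$ coming from a commutator. The goal is to show that all of these $N$ lie in a common nilpotent structure, i.e.\ that they preserve a common flag. Our first attempt would be the following. The symplectic parts of $UPU^\dagger Q UP^\dagger U^\dagger Q^\dagger$ depend bilinearly on the labels of $P$ and $Q$. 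Because $U^\dagger Q U$ conjugates Paulis to Cliffords whose symplectic maps commute modulo Pauli phases, the resulting matrices span a subspace $\MatrixSubspace$ of nilpotent matrices closed under products. An algebra of nilpotent matrices is simultaneously strictly upper-triangularizable, so the generated group sits inside a Sylow $\dimension$-subgroup of $Sp(2\qudits,\dimension)$. Its extension by the Pauli kernel is therefore a $\dimension$-group.

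\textbf{Step 3: fixed point and conclusion.} The number of Lagrangians is $\abs{\Lagrangians} = \prod_{i=1}^{\qudits}(\dimension^i+1)$, which is $\equiv 1 \pmod \dimension$. The action of $\ConjugationGroup_2(U)$ is through Cliffords, so it acts on all of $\Lagrangians$. Orbits of a $\dimension$-group have size a power of $\dimension$, so there is a fixed Lagrangian $L_0$. Being fixed means every $VQV^\dagger$ normalizes $\lspan \Stabilizer_{L_0}$. One level down, the same counting argument applied to $\ConjugationGroup_1(U)$ then yields a Lagrangian $L$ with $U\Stabilizer_L U^\dagger \subseteq \lspan \Stabilizer_{L'}$. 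Here we would use that $U(\cdot)U^\dagger$ restricted to elements fixing $L_0$ is semi-Clifford-like, and we transfer the fixed point back through $U$, as in Pllaha et al. A dimension count ($\dimension^\qudits$ on both sides) then shows that $U$ maps $\lspan\Stabilizer_L$ onto $\lspan\Stabilizer_{L'}$, which proves the statement.

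The main obstacle is Step 2. Showing that the generated group is a $\dimension$-group, rather than merely that each generator has order $\dimension$, needs a genuine structural fact about $\Cliffords_3$ and $\Cliffords_4$. This is also exactly where the restriction $\level \le 4$ enters.
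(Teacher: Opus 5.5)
Your overall architecture matches the paper's: let $\ConjugationGroup_2(U)\subseteq\Cliffords_2$ act on $\Lagrangians$, use $\abs{\Lagrangians}\equiv 1\pmod{\dimension}$, then descend to $\ConjugationGroup_1(U)$. However, neither load-bearing step is actually established.

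\textbf{Step 2.} This is the step you flag yourself, and the route you sketch does not work. Each generator $H_{x,y}=G_xP_yG_x^\dagger$ (with $G_x=UP_xU^\dagger$) does have unipotent symplectic part, since $H_{x,y}^{\dimension}\propto I$. But unipotent generators do not give a $\dimension$-group: $\mathrm{Sp}(2\qudits,\dimension)$ is itself generated by transvections.

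Your claims that the nilpotent parts depend bilinearly on the labels and span a product-closed nil algebra are unsupported. The labels enter multiplicatively, $H_{x,y+y'}\propto H_{x,y}H_{x,y'}$, not bilinearly. The symplectic parts are only known to commute for fixed $x$, and nothing in your sketch controls products across different $x$. Separately, $U^\dagger QU\in\Cliffords_3$ does not follow from $U\in\Cliffords_4$, since the definition only constrains $UQU^\dagger$.

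The missing idea is short and purely group-theoretic:
\begin{itemize}
\item $\Paulis\le\ConjugationGroup_2(U)$ because $H_{0,y}=P_y$.
\item $\Paulis\trianglelefteq\ConjugationGroup_2(U)$ because $\ConjugationGroup_2(U)\le\Cliffords_2$. Hence $\Paulis\le O_\dimension(\ConjugationGroup_2(U))$.
\item Conjugation by $G_x$ permutes the generators, $G_xH_{y,z}G_x^\dagger=H_{x+y,z}$. So it is an automorphism of $\ConjugationGroup_2(U)$ and preserves the characteristic subgroup $O_\dimension(\ConjugationGroup_2(U))$.
\item Therefore every generator $G_xP_yG_x^\dagger$ lies in $O_\dimension(\ConjugationGroup_2(U))$, which forces $\ConjugationGroup_2(U)=O_\dimension(\ConjugationGroup_2(U))$.
\end{itemize}

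\textbf{Steps 1 and 3.} The descent in Step 3 is also missing, and Step 1 has the roles reversed. $\ConjugationGroup_2(U)\subseteq\Cliffords_2$ acts on all of $\Lagrangians$ for free. Elements of $\ConjugationGroup_1(U)\subseteq\Cliffords_3$, by contrast, need not send Pauli MASAs to Pauli MASAs. For example, with $U=\mathrm{diag}(1,e^{i\pi/8})$ on one qubit, $G_X\propto XT^\dagger$ maps $\lspan\{I,X\}$ to $\lspan\{I,(X+Y)/\sqrt2\}$. So observing that the images generate a $\dimension^\qudits$-dimensional abelian algebra proves nothing, and ``semi-Clifford-like, as in Pllaha et al.'' does not fill the gap.

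What you need is the Pauli-invariance lemma: a subspace invariant under conjugation by every Pauli is spanned by Paulis, so a $\Dimension$-dimensional abelian such algebra is a Pauli MASA. With it the descent goes as follows.
\begin{itemize}
\item Take $L\in\FixedLagrangians=\Lagrangians^{\ConjugationGroup_2(U)}$. Then $P_y(G_x\Algebra_LG_x^\dagger)P_y^\dagger=G_xH_{-x,y}\Algebra_LH_{-x,y}^\dagger G_x^\dagger=G_x\Algebra_LG_x^\dagger$, so $G_x\Algebra_LG_x^\dagger$ is a Pauli MASA.
\item It is again $\ConjugationGroup_2(U)$-fixed, because $G_x$ normalizes $\ConjugationGroup_2(U)$. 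Hence $\ConjugationGroup_1(U)$ acts on $\FixedLagrangians$.
\item Since $\abs{\FixedLagrangians}\equiv1\pmod\dimension$, there is $S$ with $G_x\Algebra_SG_x^\dagger=\Algebra_S$ for all $x$.
\item $U^\dagger\Algebra_SU$ is Pauli-invariant by the same computation, so the lemma gives $U^\dagger\Algebra_SU=\Algebra_T$ for some Lagrangian $T$.
\end{itemize}
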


Note that the second conjugation group readily generalizes to $j > 2$ via the recursive definition $\ConjugationGroup_j \coloneqq \langle V \Paulis V^\dagger : V \in \ConjugationGroup_{j - 1}\rangle$. We identify a nontrivial sufficient condition to show that a gate in an arbitrary level is generalized semi-Clifford, namely, that $\ConjugationGroup_{\level-2}(U) \subseteq \Cliffords_2$; this implies $\ConjugationGroup_{\level - j} \subseteq \Cliffords_j$ is a $\dimension$-group for all $j$ and that the conjugation groups form a normal chain $\ConjugationGroup_{\level - 2}(U) \trianglelefteq \ConjugationGroup_{\level - 3}(U)\trianglelefteq \cdots \trianglelefteq \ConjugationGroup_2(U)$, with which our proof generalizes (see \cref{thm:conditional-higher-gsc}).

Our definition of conjugation groups is the key conceptual contribution of this work, and we expect them to be a useful addition to the tools currently available for the analysis of the Clifford hierarchy.

\paragraph{Related work.}
While there has been little work on the generalized semi-Clifford conjecture itself, properties of the Clifford hierarchy have been studied extensively.

For $\Cliffords_3$, \cite{FengLuo2024} determined the group structure of diagonal gates and \cite{HeRobitailleTan2024,HeRobitailleTan2025} characterized the permutation gates. \cite{AndersonWeippert2024} derived necessary conditions for controlled gates to lie in the hierarchy, while \cite{XuWang2026} determined the ascent within the hierarchy caused by adding a control qubit to (a family of) $\Cliffords_2$ gates.

\cite{Anderson2024,BengtssonEtAl2014,BastioniEtAl2026} studied, respectively, the (generalized) semi-Clifford groups, the order-three symmetries, and the square roots (of Hermitians) contained in the hierarchy. Despite this substantial body of work, to the best of our knowledge, the objects we study have not yet been investigated systematically.

\section{Preliminaries}

\paragraph{Notation.}
Throughout the paper, $\dimension$ denotes a prime number corresponding to the local dimension of a quantum system (i.e., qudits of dimension $\dimension$) and $\qudits$ denotes the number of qudits in the system. The total dimension is $\Dimension=\dimension^\qudits$, and all unitaries are transformations over $(\Complex^{\dimension})^{\otimes \qudits} \cong \Complex^{\Dimension}$. We also use $\Field_\dimension$ to denote the finite field with $\dimension$ elements, and $\RootOfUnity$ to denote the primitive $\dimension^\text{th}$ root of unity $e^{2 \pi i/\dimension}$. The computational basis of $\Complex^\dimension$ is $\set{\ket{x} : x \in \Field_\dimension}$. 

We usually consider subgroups and subsets of unitaries quotiented by phase; we denote the quotient group with calligraphic letters (e.g., $\Paulis = \Cliffords_1$ and $\Cliffords_2$ denote the Pauli and Clifford groups modulo phase respectively) and explicitly multiply by phases when required (e.g., $\langle \RootOfUnity \rangle \Paulis$ is the Pauli group with phase).\footnote{Note that we do not consider, e.g., $\mathbb{S}_1 \Paulis = \set{e^{i\theta} P : P \in \Paulis}$, the Pauli group with all complex phases.} With slight abuse of notation, we also write $\Subset$ for a subset (that is not necessarily a subgroup) of $\Unitaries$, the unitary group modulo phase.

We use $\Group \leq \AnotherGroup$ (respectively, $\Group \trianglelefteq \AnotherGroup$) to mean that $\Group$ is a subgroup (respectively, normal subgroup) of $\AnotherGroup$. $\langle U \rangle$ and $\langle \Subset \rangle$ denote the multiplicative groups generated by $U$ and by $\Subset$, respectively; and we also write $U \Subset V$ for $\set{U S V : S \in \Subset}$.  

\subsection{Symplectic spaces}
We first define the group of $\qudits$-qudit Pauli operators, which plays a fundamental role in quantum physics.
\begin{definition}
The Pauli group $\Paulis \coloneqq \langle X, Z\rangle^{\otimes \qudits} / \langle \RootOfUnity I \rangle$ is generated by the unitaries $X$ and $Z$, defined by the mappings $\ket{x} \mapsto \ket{x + 1 \pmod \dimension}$ and $\ket{x} \mapsto \RootOfUnity^{x} \ket{x}$, respectively.
\end{definition}

Elements of $\Paulis$ can be represented by the label space $\PauliLabelSpace$ equipped with the symplectic form $[(x_1, x_2),(y_1, y_2)] \coloneqq x_1 \cdot y_2 - x_2 \cdot y_1$.
Then $\langle \RootOfUnity \rangle \Paulis=\set{\RootOfUnity^i P_x:x\in\PauliLabelSpace, i \in [\dimension]}$ satisfies $P_x P_y \propto P_{x+y}$ and $P_xP_yP_x^\dagger=\RootOfUnity^{[x,y]}P_y$. Under this correspondence, subgroups of $\Paulis$ are related to subspaces of $\PauliLabelSpace$. A distinguished family of subspaces that we will work with are called Lagrangians.
\begin{definition}
A subspace $L\subset \PauliLabelSpace$ is \emph{isotropic} if $[v,w]=0$  for all $v,w\in L$. If $L$ is maximal among isotropic subspaces, it is \emph{Lagrangian}.
\end{definition}

Note that Lagrangians are equivalently defined as isotropic subspaces of dimension $n$: isotropy gives $L\subseteq L^\perp = \set{y : [x, y] = 0 \text{ for all } x \in L}$, maximality gives $L=L^\perp$, and non-degeneracy gives $\dim L + \dim L^\perp=2n$; and an isotropic subspace $L$ of dimension $n$ is a Lagrangian since $L\subseteq L^\perp$ and $\dim L + \dim L^\perp=2n$.

Maximal commuting subgroups of $\langle \RootOfUnity \rangle \Paulis$ (more precisely, equivalence classes of such subgroups modulo phase) are in bijective correspondence with Lagrangians in $\PauliLabelSpace$ via the mapping $L \mapsto \Stabilizer_L = \set{P_x:x\in L}$. The isotropy of $L$ implies $\Stabilizer_L$ is abelian, and maximality of $\Stabilizer_L$ follows from that of $L$. Since Lagrangians have dimension $\qudits$, they and their associated maximal commuting subgroups have cardinality $\Dimension = \dimension^\qudits$.

Composing the above correspondence for $L$ with the natural mapping from multiplicative subgroups of the matrix algebra $M_{\Dimension}(\Complex)$ to subalgebras yields $\Algebra_L = \lspan_\mathbb{C} \Stabilizer_L$. These are clearly abelian, maximal (among abelian subalgebras) and spanned by Paulis.

\begin{definition}
    A \emph{Pauli Maximal Abelian Subalgebra (Pauli MASA)} $\Algebra$ is a maximal abelian subalgebra of $M_{\Dimension}(\Complex)$ such that $\Algebra = \lspan_\Complex \Subset$ with $\Subset \subseteq \Paulis$. Pauli MASAs associated to a Lagrangian $L$ are denoted $\Algebra_L \coloneqq \lspan_\Complex\set{P_x : x \in L}$.
\end{definition}

Note that conjugation by any unitary takes a MASA to a MASA and preserves its dimension. But not all maximal abelian subalgebras are spanned by Paulis; indeed, conjugation by an arbitrary unitary maps a Pauli MASA to a MASA that is not necessarily Pauli.

We note two further facts about Lagrangian subspaces and MASAs (\cref{cor:nonempty-fixed-point-set,cor:pauli-masa}) that we will use later.

\begin{lemma}[\cite{Gross2006}]
\label{lem:lagrangian-subspaces}
The number of Lagrangian subspaces of $\PauliLabelSpace$ is
$\prod_{j=1}^\qudits(\dimension^j+1)$.
\end{lemma}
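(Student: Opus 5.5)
The plan is a double count by induction on $\qudits$. Let $N_\qudits$ be the number of Lagrangian subspaces of $\PauliLabelSpace$; we want $N_\qudits = (\dimension^\qudits+1)N_{\qudits-1}$ with $N_0 = 1$. Since the preliminaries have already shown that Lagrangians are exactly the isotropic subspaces of dimension $\qudits$, the count reduces to counting $\qudits$-dimensional isotropic subspaces.

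First, count pairs $(v, L)$ with $L$ Lagrangian and $v \in L\setminus\{0\}$. Each Lagrangian has $\Dimension = \dimension^\qudits$ elements, so it contributes $\dimension^\qudits - 1$ nonzero vectors. The number of such pairs is therefore $N_\qudits(\dimension^\qudits-1)$.

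Now count the same pairs from the side of $v$. Every nonzero $v$ is isotropic, because $[v,v]=0$ for any alternating form; hence we may fix an arbitrary nonzero $v$, and there are $\dimension^{2\qudits}-1$ choices. The Lagrangians containing $v$ satisfy $v\in L=L^\perp$, so they lie inside $v^\perp$, which has dimension $2\qudits-1$. Such Lagrangians therefore correspond bijectively to Lagrangians of the quotient $v^\perp/\langle v\rangle$. This quotient is a nondegenerate symplectic space of dimension $2\qudits-2$: the form descends because $v$ is in the radical of $v^\perp$, and the radical of $v^\perp$ is exactly $v^\perp\cap (v^\perp)^\perp = v^\perp\cap\langle v\rangle=\langle v\rangle$. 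The correspondence is $L\mapsto L/\langle v\rangle$, with inverse given by taking preimages. One checks that isotropy and the dimension count $\qudits \leftrightarrow \qudits-1$ match in both directions.

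Every nondegenerate symplectic space of dimension $2\qudits-2$ over $\Field_\dimension$ is isometric to $\Field_\dimension^{2(\qudits-1)}$ with the standard form, via a symplectic basis. Hence $N_{\qudits-1}$ Lagrangians contain $v$. Equating the two counts gives
\[
N_\qudits(\dimension^\qudits-1)=(\dimension^{2\qudits}-1)N_{\qudits-1},
\]
that is, $N_\qudits=(\dimension^\qudits+1)N_{\qudits-1}$. Induction from $N_0=1$, or from $N_1=\dimension+1$ (all lines in $\Field_\dimension^2$), then yields the product formula $\prod_{j=1}^\qudits(\dimension^j+1)$.

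The only step needing care is the reduction to $v^\perp/\langle v\rangle$. There we must check three things: that the descended form is nondegenerate, that the map between Lagrangians is a bijection, and that we may invoke the existence of symplectic bases, which follows from the standard Gram–Schmidt-type argument in the nondegenerate case. Everything else is routine counting.
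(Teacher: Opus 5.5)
Your argument is correct, but it takes a different route from the paper. The paper does not prove the count from scratch: it quotes Gross's formula $\prod_{i=0}^{m-1}\frac{d^{2(n-i)}-1}{d^{m-i}-1}$ for the number of $m$-dimensional isotropic subspaces of $\mathbb{F}_d^{2n}$. It then specializes to $m=n$, where each factor simplifies to $\frac{d^{2(n-i)}-1}{d^{n-i}-1}=d^{n-i}+1$.

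You instead derive the recursion $N_n(d^n-1)=(d^{2n}-1)N_{n-1}$ by double counting pairs $(v,L)$ and passing to $v^\perp/\langle v\rangle$. The checks you list all go through: the form descends, the radical of $v^\perp$ is exactly $\langle v\rangle$ (using $(v^\perp)^\perp=\langle v\rangle$), and taking quotients and preimages gives a bijection between Lagrangians containing $v$ and Lagrangians of the reduced space.

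The trade-off is as follows. The paper's route is one line, but it outsources the real work to a citation; in exchange it gives the count for every isotropic dimension $m$. That formula is itself usually proved by counting ordered bases $v_1,\dots,v_m$ with $v_{i+1}\in\langle v_1,\dots,v_i\rangle^\perp\setminus\langle v_1,\dots,v_i\rangle$, a close cousin of your double count. Your argument is self-contained but specific to $m=n$.

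You can drop the appeal to symplectic bases by stating the induction hypothesis for an arbitrary nondegenerate alternating space of dimension $2(n-1)$ over $\mathbb{F}_d$. Your reduction step never uses the standard coordinates, so nothing is lost. Your recursion also gives $N_n\equiv N_{n-1}\equiv 1 \pmod d$ immediately, which is the only consequence of this lemma the paper actually uses (in \cref{cor:nonempty-fixed-point-set}).
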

\noindent We defer the proof of this fact (an immediate corollary of the formula for the number of isotropic subspaces of each dimension) to \cref{sec:deferred-proofs}.

\begin{definition}
    For a prime number $p$, a group $G$ is a $p$-group if every element $g\in G$ has order a power of $p$. If $G$ is a finite group, $|G|=p^n$ for some $n$ is an equivalent condition. In this paper, we only consider $\dimension$-groups, where $\dimension$ is the qudit dimension.
\end{definition}

\begin{lemma}
    \label{lem:d-group-action-size}
    If a finite $\dimension$-group $\Group$ acts on a finite set $\Subset$, then $\abs{\Subset^\Group} \equiv \abs{\Subset} \pmod{\dimension}$, where $\Subset^\Group$ is the fixed-point set $\Subset^\Group = \set{T \in \Subset: G \cdot T = T \text{ for all } G \in \Group}$ of the action.
\end{lemma}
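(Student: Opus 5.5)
The plan is to prove \cref{lem:d-group-action-size} with the orbit--stabilizer theorem and the decomposition of $\Subset$ into orbits. Since $\Group$ acts on $\Subset$, the orbits $\Group \cdot T = \set{G \cdot T : G \in \Group}$ partition $\Subset$. This gives
\[
\abs{\Subset} = \sum_{\text{orbits } \mathcal{O}} \abs{\mathcal{O}} .
\]
An element $T$ lies in the fixed-point set $\Subset^\Group$ exactly when its orbit is the singleton $\set{T}$. So we split the sum into singleton orbits, which contribute $\abs{\Subset^\Group}$ in total, and non-singleton orbits.

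By orbit--stabilizer, the orbit of $T$ has size $\abs{\Group \cdot T} = [\Group : \Group_T] = \abs{\Group}/\abs{\Group_T}$, where $\Group_T$ is the stabilizer subgroup of $T$. Since $\Group$ is a finite $\dimension$-group, $\abs{\Group} = \dimension^m$ for some $m$ (the equivalent characterization noted in the definition). By Lagrange's theorem, $\abs{\Group_T}$ divides $\dimension^m$, so it equals $\dimension^{j}$ for some $j \le m$. Hence every orbit has size $\dimension^{m-j}$, a power of $\dimension$. Such an orbit has size $1$ precisely when it is a fixed point, and otherwise its size is divisible by $\dimension$.

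Reducing the orbit decomposition modulo $\dimension$ then kills every non-singleton term, leaving $\abs{\Subset} \equiv \abs{\Subset^\Group} \pmod{\dimension}$.

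There is no real obstacle here. The only point to check is that $\abs{\Group}$ is a power of $\dimension$ for a finite group in which every element has $\dimension$-power order; this is Cauchy's theorem, and the paper's definition already records it as an equivalent condition, so we may assume it.
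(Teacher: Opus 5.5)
Your proposal is correct and follows essentially the same argument as the paper: orbit--stabilizer shows every orbit has $\dimension$-power size, singleton orbits are exactly the fixed points, and reducing the orbit decomposition modulo $\dimension$ gives the congruence. Your remark that the equivalence between the elementwise and order-based definitions of a finite $\dimension$-group comes from Cauchy's theorem is a correct small addition that the paper leaves implicit.
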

\begin{proof}
    By the orbit-stabilizer theorem, the size of every orbit $\abs{\Group \cdot T}$ divides $\abs{\Group}$ and is hence a power of $\dimension$. Orbits of size $1$ are exactly the elements of $\Subset^\Group$, and every other orbit has size divisible by $\dimension$. Decomposing $\Subset$ into $\Group$-orbits, we have $\abs{\Subset^\Group} \equiv \abs{\Subset} \pmod{\dimension}$.
\end{proof}

\begin{corollary}
    \label{cor:nonempty-fixed-point-set}
    If a finite $\dimension$-group $\AnotherGroup$ acts on the set $\Lagrangians$ of Lagrangian subspaces of $\Field_\dimension^{2\qudits}$, then the fixed-point set $\FixedLagrangians = \Lagrangians^\AnotherGroup$ satisfies $\abs{\FixedLagrangians} \equiv 1 \pmod{\dimension}$. Moreover, if a finite $\dimension$-group $\Group$ acts on $\FixedLagrangians$, then $\abs{\FixedLagrangians^\Group} \equiv 1 \pmod{\dimension}$ as well; in particular, $\FixedLagrangians \neq \varnothing$ and $ \FixedLagrangians^\Group \neq \varnothing$.
\end{corollary}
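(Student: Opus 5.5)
The plan is to apply \cref{lem:d-group-action-size} twice, using \cref{lem:lagrangian-subspaces} to compute $\abs{\Lagrangians}$ modulo $\dimension$.

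For the first claim, take the action of the $\dimension$-group $\AnotherGroup$ on $\Subset = \Lagrangians$. \Cref{lem:d-group-action-size} gives $\abs{\FixedLagrangians} \equiv \abs{\Lagrangians} \pmod{\dimension}$. By \cref{lem:lagrangian-subspaces}, $\abs{\Lagrangians} = \prod_{j=1}^{\qudits}(\dimension^j+1)$. Each factor satisfies $\dimension^j + 1 \equiv 1 \pmod{\dimension}$ because $j \geq 1$, so the product is $\equiv 1 \pmod{\dimension}$. Hence $\abs{\FixedLagrangians} \equiv 1 \pmod{\dimension}$.

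For the second claim, apply \cref{lem:d-group-action-size} again, now with the $\dimension$-group $\Group$ acting on the finite set $\Subset = \FixedLagrangians$. This gives $\abs{\FixedLagrangians^\Group} \equiv \abs{\FixedLagrangians} \equiv 1 \pmod{\dimension}$. Finally, a set whose size is congruent to $1$ modulo a prime $\dimension \geq 2$ cannot have size $0$. So both $\FixedLagrangians$ and $\FixedLagrangians^\Group$ are nonempty.

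There is no real obstacle here. The only points to check are that $\Lagrangians$ and $\FixedLagrangians$ are finite, which holds because $\PauliLabelSpace$ is finite, and that the hypothesis that $\Group$ acts on $\FixedLagrangians$ is exactly what lets us invoke \cref{lem:d-group-action-size} with $\Subset = \FixedLagrangians$.
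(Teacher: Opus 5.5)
Your proposal is correct and matches the paper's proof: apply \cref{lem:d-group-action-size} once to $\AnotherGroup$ acting on $\Lagrangians$ and once to $\Group$ acting on $\FixedLagrangians$, using $\abs{\Lagrangians}=\prod_{j=1}^{\qudits}(\dimension^j+1)\equiv 1 \pmod{\dimension}$ from \cref{lem:lagrangian-subspaces}. Your extra remarks (each factor is $\equiv 1 \pmod{\dimension}$, and a set of size $\equiv 1 \pmod{\dimension}$ is nonempty) just spell out steps the paper leaves implicit.
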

\begin{proof}
    The claim follows from $\abs{\Lagrangians} = \prod_{j=1}^{\qudits}(\dimension^j+1) \equiv 1 \pmod{\dimension}$ (\cref{lem:lagrangian-subspaces}) and \cref{lem:d-group-action-size} (applied twice: once for $\AnotherGroup$ and $\Lagrangians$, and again for $\Group$ and $\FixedLagrangians$). 
\end{proof}

The following lemma and corollary will allow us to show under certain circumstances that an abelian subalgebra that we get from conjugating a Pauli MASA is also a Pauli MASA. 
\begin{lemma}[Pauli invariance]
\label{lem:pauli-invariance}
Let $\MatrixSubspace \subseteq M_{\Dimension
}(\Complex)$ be a linear subspace satisfying
\[
P_x \MatrixSubspace P_x^\dagger=\MatrixSubspace
\]
for all $x \in \Field_\dimension^{2\qudits}$. Then $\MatrixSubspace = \lspan_\Complex\Subset$ with $\Subset \subseteq \Paulis$. 
\end{lemma}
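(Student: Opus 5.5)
The plan is to decompose $M_{\Dimension}(\Complex)$ into joint eigenspaces of the commuting family of conjugation maps $\mathrm{Ad}_{P_x}\colon M \mapsto P_x M P_x^\dagger$, and to show that each such eigenspace is one-dimensional and spanned by a single Pauli. Any subspace invariant under all $\mathrm{Ad}_{P_x}$ must then be a direct sum of some of these eigenspaces, which gives exactly $\MatrixSubspace = \lspan_\Complex \Subset$ for some $\Subset \subseteq \Paulis$.

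First, I use that the Paulis $\set{P_y : y \in \PauliLabelSpace}$ (one representative per class modulo phase) form a basis of $M_{\Dimension}(\Complex)$. They are orthogonal under the Hilbert--Schmidt inner product, since $\operatorname{tr}(P_y^\dagger P_z) = 0$ for $y \neq z$ and equals $\Dimension$ for $y = z$, and there are $\dimension^{2\qudits} = \Dimension^2$ of them. The relation $P_x P_y P_x^\dagger = \RootOfUnity^{[x,y]} P_y$ shows that $P_y$ is a simultaneous eigenvector of every $\mathrm{Ad}_{P_x}$, with character $\chi_y(x) = \RootOfUnity^{[x,y]}$. Non-degeneracy of the symplectic form makes $y \mapsto \chi_y$ injective: if $\chi_y = \chi_z$, then $[x, y - z] = 0$ for all $x$, so $y = z$. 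Hence $M_{\Dimension}(\Complex) = \bigoplus_y \Complex P_y$ is a decomposition into pairwise distinct one-dimensional joint eigenspaces.

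Next, take an invariant subspace $\MatrixSubspace$ and an element $M = \sum_y c_y P_y \in \MatrixSubspace$. I project onto a single component with the character-averaging operator
\[
\pi_y(M) = \frac{1}{\Dimension^2}\sum_{x} \overline{\chi_y(x)}\, P_x M P_x^\dagger .
\]
By invariance, $\pi_y(M) \in \MatrixSubspace$. By orthogonality of characters of $\Field_\dimension^{2\qudits}$, $\pi_y(M) = c_y P_y$. So every $P_y$ with $c_y \neq 0$ lies in $\MatrixSubspace$. Setting $\Subset = \set{P_y : P_y \in \MatrixSubspace}$ then gives $\MatrixSubspace \subseteq \lspan_\Complex \Subset$, and the reverse inclusion is trivial.

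The only step needing care is the character orthogonality $\sum_x \RootOfUnity^{[x, z-y]} = \Dimension^2 \delta_{y,z}$. This is a standard sum over a linear functional: when $z - y \neq 0$, non-degeneracy makes $x \mapsto [x, z-y]$ a nonzero, hence surjective, linear functional onto $\Field_\dimension$, so the sum is $\dimension^{2\qudits - 1}\sum_{t \in \Field_\dimension} \RootOfUnity^t = 0$. Phase ambiguities modulo $\langle \RootOfUnity \rangle$ are harmless, because the conjugation $P_x M P_x^\dagger$ does not depend on the phase of $P_x$.
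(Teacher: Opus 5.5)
Your proof is correct and follows essentially the same route as the paper: expand each element of $\MatrixSubspace$ in the Pauli basis and apply the character-weighted twirl $\frac{1}{\dimension^{2\qudits}}\sum_{x}\RootOfUnity^{-[x,y]}P_x(\cdot)P_x^\dagger$. This twirl stays inside $\MatrixSubspace$ by invariance and isolates the $P_y$ component, so every Pauli in the support of an element of $\MatrixSubspace$ lies in $\MatrixSubspace$. You also spell out, via non-degeneracy of the symplectic form, the character-sum vanishing that the paper uses without comment.
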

We defer the proof of \cref{lem:pauli-invariance} to \cref{sec:deferred-proofs}. As its immediate corollary, we have:

\begin{corollary}
\label{cor:pauli-masa}
If $\Algebra \subseteq M_\Dimension(\Complex)$ is an abelian subalgebra of dimension
$\Dimension$ such that $P_x \Algebra P_x^\dagger= \Algebra$ for every $x\in\PauliLabelSpace$, then $\Algebra$ is a Pauli MASA.
\end{corollary}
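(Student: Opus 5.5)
We derive \cref{cor:pauli-masa} from \cref{lem:pauli-invariance}, then check maximality by dimension counting. Since $\Algebra$ is a linear subspace of $M_\Dimension(\Complex)$ with $P_x \Algebra P_x^\dagger = \Algebra$ for all $x \in \PauliLabelSpace$, \cref{lem:pauli-invariance} gives $\Algebra = \lspan_\Complex \Subset$ for some $\Subset \subseteq \Paulis$. We may take $\Subset = \Algebra \cap \Paulis$ (modulo phase), and it remains to show that $\Algebra$ is a \emph{maximal} abelian subalgebra.

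First, identify the Pauli set. Distinct Paulis (modulo phase) are linearly independent: they are orthogonal under the Hilbert--Schmidt inner product, since $\operatorname{tr}(P_x^\dagger P_y) = 0$ for $x \neq y$. So $\Subset$ contains a basis of $\Algebra$, and $\dim \Algebra = \Dimension$ forces $\abs{\Subset} \geq \Dimension$, hence $\abs{\Subset} = \Dimension$ by independence. Because $\Algebra$ is abelian, any $P_x, P_y \in \Subset$ commute, so $\RootOfUnity^{[x,y]} = 1$, i.e.\ $[x,y]=0$. The label set $S = \set{x : P_x \in \Subset}$ is therefore pairwise isotropic. Because $\Algebra$ is closed under multiplication and $P_x P_y \propto P_{x+y}$, the product $P_{x+y}$ lies in $\Algebra$. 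Expanding it in the orthogonal Pauli basis and using independence shows $x + y \in S$, so $S$ is a subgroup of $\PauliLabelSpace$, hence an $\Field_\dimension$-subspace since $\dimension$ is prime. Thus $S$ is an isotropic subspace with $\abs{S} = \dimension^\qudits$, so $\dim S = \qudits$, and by the remark after the definition of Lagrangians, $S = L$ is Lagrangian. Hence $\Algebra = \Algebra_L$.

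Second, establish maximality. Any abelian subalgebra of $M_\Dimension(\Complex)$ has dimension at most $\Dimension$. One way to see this: $\Algebra_L$ is spanned by commuting unitaries, so it is simultaneously diagonalizable and lies in the algebra of diagonal matrices in some basis, which has dimension $\Dimension$. Alternatively, the paper already asserts that the $\Algebra_L$ are maximal. So if $\Algebra \subseteq \mathcal{B}$ with $\mathcal{B}$ abelian, then $\mathcal{B}$ is contained in the commutant of $\Algebra$. Since $\Algebra$ is a self-adjoint commutative algebra of dimension $\Dimension$, it equals the full diagonal algebra in a simultaneous eigenbasis, and the commutant of the full diagonal algebra is itself. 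Hence $\mathcal{B} = \Algebra$, and $\Algebra$ is a Pauli MASA.

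The only mildly delicate step is the closure $x+y\in S$, i.e.\ deducing from $P_{x+y} \in \Algebra = \lspan \Subset$ that $P_{x+y}$ itself lies in $\Subset$. This follows from the uniqueness of Pauli expansions, since the Paulis form an orthogonal basis. Everything else is bookkeeping.
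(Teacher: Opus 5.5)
Your proof is correct and has the same skeleton as the paper's: invoke \cref{lem:pauli-invariance}, show the label set $S$ is Lagrangian, and conclude $\Algebra=\Algebra_L$. Two sub-steps differ.

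\textbf{Subspace step.} The paper never uses multiplicative closure. Pairwise isotropy and bilinearity make $\lspan_{\Field_\dimension} S$ isotropic, hence of size at most $\dimension^\qudits=\abs{S}$, which forces $S=\lspan_{\Field_\dimension} S$. You instead use $P_xP_y\propto P_{x+y}$ together with closure of $\Algebra$ under products. This is equally valid, and with your choice $\Subset=\Algebra\cap\Paulis$ the ``delicate'' step is immediate. The paper's counting argument is slightly more economical: it needs only that $\Algebra$ is the span of $\Dimension$ independent, pairwise commuting Paulis.

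\textbf{Maximality step.} The paper just cites that $\Algebra_L$ is maximal. Your commutant argument actually proves it: any abelian $\mathcal B\supseteq\Algebra$ lies in $\Algebra'$, and $\Algebra$ is the full diagonal algebra in a common eigenbasis, which is its own commutant. That is a worthwhile addition.

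\textbf{One sentence should be deleted.} The claim ``Any abelian subalgebra of $M_\Dimension(\Complex)$ has dimension at most $\Dimension$'' is false once $\Dimension\geq 4$. By Schur's theorem the maximum is $\lfloor \Dimension^2/4\rfloor+1$. For example, for $p+q=\Dimension$ the matrices $\bigl(\begin{smallmatrix} aI_p & B\\ 0 & aI_q\end{smallmatrix}\bigr)$ with $a\in\Complex$ and $B\in\Complex^{p\times q}$ form an abelian subalgebra of dimension $pq+1$.

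Your ``one way to see this'' only covers algebras of simultaneously diagonalizable matrices. A competitor $\mathcal B\supseteq\Algebra$ is not known a priori to be of that type, so maximality cannot be read off from a dimension bound on $\mathcal B$. The commutant argument is what does the work, and it needs no such bound.
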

\begin{proof}
    \cref{lem:pauli-invariance} implies existence of a Pauli basis $\Stabilizer \subseteq \Paulis$ such that $\Algebra = \lspan_\Complex \Stabilizer$; define $L \coloneqq \set{x : P_x \in \Stabilizer}$. $\Algebra$ abelian implies $[x,y] = 0$ for all $x, y \in L$, and thus that $\lspan_{\Field_\dimension} L$ is an isotropic subspace (of size at most $\Dimension$). Finally, $\abs{L} = \abs{\Stabilizer} = \Dimension$ shows that $L = \lspan_{\Field_\dimension} L$ is a Lagrangian subspace. Then $\Algebra = \Algebra_L$, which in particular is maximal.
\end{proof}

\subsection{The Clifford hierarchy}
\begin{definition}
    The Clifford hierarchy $\Cliffords_1 \subset \Cliffords_2 \subset \cdots \subset \Cliffords_\level \subset \cdots$ is defined inductively. 
    Let 
    \[\Cliffords_1 \coloneqq \Paulis\] and
    \[\Cliffords_{k}\coloneqq\set{U\in \Unitaries: UP_xU^{\dagger}\in \Cliffords_{k-1} \text{ for all } x\in \PauliLabelSpace}.\]
\end{definition}
\noindent Note that $\Cliffords_2$ (the normalizer of $\Paulis$) is the Clifford group, but $\Cliffords_\level$ is not a group for $\level > 2$.

\begin{definition}[\cite{ZengChenChuang2008}]
    A unitary $U$ is \emph{generalized semi-Clifford} if there exist Lagrangians $L$ and $S$ such that $U\Algebra_LU^\dagger=\Algebra_S$.
\end{definition}
Note that conjugating some Pauli MASA to another is equivalent to admitting the decomposition $U = C R D C'$, where $C, C' \in \Cliffords_2$, $R$ is a permutation matrix and $D \in \Cliffords_\level$ is diagonal \cite{ZengChenChuang2008}; we will not make use of this fact, however.

\section{The core argument at level 3}
In this section, we reprove the fact that the third level of the Clifford hierarchy is generalized semi-Clifford. Although this result has long been known in the qubit case \cite{BeigiShor2010}, our proof generalizes to prime qudits and isolates the key ideas that extend to $\Cliffords_4$. 

In their proof that $\Cliffords_3$ (for qubits) is generalized semi-Clifford, \cite{PllahaEtAl2020} show that every gate $U \in \Cliffords_3$ admits a Clifford correction $C \in \Cliffords_2$ such that $CU$ fixes some Pauli under conjugation; then, they show this implies $CU$ also fixes the span of a maximal commuting subgroup (MCS), and thus that $U$ conjugates the span of an MCS into that of another MCS. Our proof retains this fixed-point idea but develops it in two directions. First, we consider the action of the first conjugation group on the set of algebras associated to Lagrangians (equivalently, spans of MCSs) directly. Second, we generalize the qubit-specific argument into one that holds in arbitrary prime dimension $\dimension$. 

As \cite{PllahaEtAl2020}, we investigate the group $U\Paulis U^\dagger \cong \Paulis$, which we define explicitly as the first conjugation group and generalize in \cref{def:conjugation-groups}. The generalization of this construction is the conceptual advance that will allow us to prove \cref{thm:level4}.

\begin{theorem}
Every gate on $n$ qudits of prime dimension $\dimension$ in the third level of the Clifford hierarchy is
generalized semi-Clifford.
\end{theorem}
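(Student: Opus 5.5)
Given $U \in \Cliffords_3$, I will consider the first conjugation group $\ConjugationGroup_1(U) = U\Paulis U^\dagger$. Since $U \in \Cliffords_3$, every element of it is a Clifford. Modulo phase it is isomorphic to $\Paulis \cong \Field_\dimension^{2\qudits}$, an elementary abelian $\dimension$-group. Cliffords act on $\Lagrangians$: a Clifford $C$ induces a symplectic map on $\PauliLabelSpace$ (up to phases on the Paulis), and therefore sends Lagrangians to Lagrangians via $C\Algebra_L C^\dagger = \Algebra_{C\cdot L}$. Thus $\ConjugationGroup_1(U)$ acts on $\Lagrangians$ through the homomorphism $\Cliffords_2 \to Sp(2\qudits,\dimension)$. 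The image of a $\dimension$-group is again a $\dimension$-group, so \cref{cor:nonempty-fixed-point-set} gives a Lagrangian $S$ fixed by every $V \in \ConjugationGroup_1(U)$. Concretely, $U P_x U^\dagger \,\Algebra_S\, U P_x^\dagger U^\dagger = \Algebra_S$ for all $x$.

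Next I conjugate by $U^\dagger$. Set $\Algebra \coloneqq U^\dagger \Algebra_S U$. This is an abelian subalgebra of dimension $\Dimension$, being the image of $\Algebra_S$ under an algebra automorphism. The fixed-point condition becomes $P_x \Algebra P_x^\dagger = \Algebra$ for every $x \in \PauliLabelSpace$. By \cref{cor:pauli-masa}, $\Algebra = \Algebra_L$ for some Lagrangian $L$. Hence $U\Algebra_L U^\dagger = \Algebra_S$, which is exactly the definition of generalized semi-Clifford.

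The step I expect to require the most care is making the action well defined modulo phases. Conjugation by $C \in \Cliffords_2$ maps $P_y$ to a phase times $P_{Ay}$ with $A$ symplectic, and phases do not affect spans. So $C\Algebra_L C^\dagger = \Algebra_{AL}$, and this is a genuine group action on $\Lagrangians$ factoring through $\ConjugationGroup_1(U)$.

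I must also confirm that $\ConjugationGroup_1(U)$ is a $\dimension$-group modulo phase. The map $P \mapsto UPU^\dagger$ is an isomorphism $\Paulis \to U\Paulis U^\dagger$ of groups modulo phase, so every element has order dividing $\dimension$. For $\dimension = 2$ this holds modulo phase even though $P^2$ may carry a phase. The action on $\Lagrangians$ only needs the group structure modulo phase, so this suffices.
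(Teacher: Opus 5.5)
Your proposal is correct and follows the paper's proof essentially step for step. The $\dimension$-group $U\Paulis U^\dagger \leq \Cliffords_2$ acts on $\Lagrangians$, and \cref{cor:nonempty-fixed-point-set} gives a fixed Lagrangian $S$; Pauli-invariance of $U^\dagger \Algebra_S U$ together with \cref{cor:pauli-masa} then finishes the argument. Your added remarks on the action factoring through the symplectic group and on working modulo phase are harmless clarifications that the paper leaves implicit.
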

\begin{proof}
Let $U\in \Cliffords_3$. Define
\[
\Group:= U \Paulis U^\dagger.
\]
Since $\Group\cong \Paulis$ and $\abs{\Paulis} = \dimension^{2\qudits}$, in particular $\Group$ is a $\dimension$-group.

Let $\Lagrangians$ be the set of Lagrangian subspaces of $\PauliLabelSpace$, and $\Algebra_L$ be the Pauli MASA associated to $L \in \Lagrangians$. Since for every $L \in \Lagrangians$ and $C \in \Cliffords_2$ there exists $S \in \Lagrangians$ such that $C\Algebra_LC^\dagger=\Algebra_S$ (namely, the MASA generated by $C P_x C^\dagger \in \Paulis$ for all $x \in L$) the group $\Group \leq \Cliffords_2$ acts by conjugation on $\Lagrangians$.

Let
\[
\FixedLagrangians\coloneqq \Lagrangians^\Group = 
\set{L\in\Lagrangians:G\Algebra_LG^\dagger=\Algebra_L
\text{ for every }G\in \Group}
\]
be the fixed-point set of the action. By \cref{cor:nonempty-fixed-point-set}, $\FixedLagrangians \neq \varnothing$. Then, take $L\in \FixedLagrangians$ and set
\[
\Algebra:=U^\dagger\Algebra_LU.
\]
$\Algebra$ is an abelian subalgebra of dimension $\Dimension$ where, for every $P_x \in \Paulis$ with associated $G_x = U P_x U^\dagger \in \Group$,
\[
\begin{aligned}
P_x\Algebra P_x^\dagger
&=P_xU^\dagger \Algebra_L U P_x^\dagger\\
&=U^\dagger (UP_xU^\dagger) \Algebra_L (UP_x^\dagger U^\dagger)U\\
&=U^\dagger G_x\Algebra_LG_x^\dagger U\\
&=U^\dagger\Algebra_LU\\
&=\Algebra
\end{aligned}
\]
Finally, $\Algebra$ is a Pauli MASA by \cref{cor:pauli-masa}, i.e., $\Algebra = \Algebra_S$ for some $S\in \Lagrangians$, so
$U\Algebra_SU^\dagger=\Algebra_L$. We thus conclude that $U$ is generalized semi-Clifford. 
\end{proof}

\section{\texorpdfstring{$\Cliffords_4$}{C4} is generalized semi-Clifford}

We now present the proof of our main theorem. We first define conjugation groups and investigate some of their properties, which will set us up for the actual proof.

\begin{definition}
    \label{def:conjugation-groups}
    We define the \emph{conjugation groups} $\ConjugationGroup_j(U)$ for $j \in \mathbb{N}$ inductively.
    Let $U\in \Cliffords_\level$. Define $\ConjugationGroup_1(U) \coloneqq U \Paulis U^\dagger = \set{U P_x U^\dagger: x\in \PauliLabelSpace}$ and, for $j > 1$, define
    \[
    \ConjugationGroup_j(U) \coloneqq \left\langle V P_x V^\dagger: V \in \ConjugationGroup_{j-1}(U), x\in \PauliLabelSpace \right\rangle.
    \]
\end{definition}
\noindent When $U$ is fixed, we write $G_x:=UP_xU^\dagger$ and 
$H_{x,y}:=G_xP_yG_x^\dagger$, so that
\[
\ConjugationGroup_1(U)=\langle G_x:x\in\PauliLabelSpace\rangle = \set{G_x:x\in\PauliLabelSpace} \qquad\text{and}
\qquad
\ConjugationGroup_2(U)=\langle H_{x,y}:x,y\in\PauliLabelSpace\rangle.
\]
Note, moreover, that $\ConjugationGroup_1(U)\subset \Cliffords_{\level-1}$ by definition, but the containment is not necessarily true for other conjugation groups. For $3 \leq \level \leq 4$, however, the generators of $\ConjugationGroup_2(U)$ lie in $\Cliffords_{\level-2}$, which is a group; the inclusion $\ConjugationGroup_2(U)\subset \Cliffords_{\level-2}$ is then immediate.

\subsection{\texorpdfstring{$\ConjugationGroup_2$}{Y2} is a \texorpdfstring{$\dimension$}{d}-group}
The first and second conjugation groups will be the main tools for finding the Pauli MASA that $U\in \Cliffords_4$ sends to another Pauli MASA. Here we show that the second conjugation group is a $\dimension$-group, which will allow us to find a fixed point of the action on the set of Lagrangians, as in the proof that $\Cliffords_3$ is generalized semi-Clifford. 

\begin{definition}
The $\dimension$-core of a group $\AnotherGroup$, denoted $O_\dimension(\AnotherGroup)$, is the maximal normal
$\dimension$-subgroup of $\AnotherGroup$.
\end{definition}
Note that $O_\dimension(\AnotherGroup)$ contains every normal $\dimension$-subgroup of $\AnotherGroup$: if $M$ is a maximal normal $\dimension$-subgroup and $N$ is any normal
$\dimension$-subgroup, then $MN$ is a normal $\dimension$-subgroup, so $MN=M$ by maximality. In particular, $O_\dimension(\AnotherGroup)$ is unique.

\begin{lemma}
$O_\dimension(\AnotherGroup)$ is characteristic in $\AnotherGroup$, i.e., $\phi(O_\dimension(\AnotherGroup))=O_\dimension(\AnotherGroup)$ for any automorphism $\phi$.
\end{lemma}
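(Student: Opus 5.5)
The plan is to show that an automorphism $\phi$ of $\AnotherGroup$ sends $O_\dimension(\AnotherGroup)$ to a normal $\dimension$-subgroup of $\AnotherGroup$, and then to use the fact, noted just before the statement, that $O_\dimension(\AnotherGroup)$ contains every normal $\dimension$-subgroup. Applying this to both $\phi$ and $\phi^{-1}$ will give equality. We work with finite groups throughout, which is the setting in which $O_\dimension$ is used in the paper.

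\textbf{Step 1: the image is a $\dimension$-group.} Let $M = O_\dimension(\AnotherGroup)$. Since $\phi$ restricts to an isomorphism $M \to \phi(M)$, it preserves element orders. Hence every element of $\phi(M)$ has order a power of $\dimension$, so $\phi(M)$ is a $\dimension$-subgroup.

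\textbf{Step 2: the image is normal.} Take $h \in \AnotherGroup$ and write $h = \phi(g)$, which is possible because $\phi$ is surjective. Then
\[
h\,\phi(M)\,h^{-1} = \phi(g)\phi(M)\phi(g)^{-1} = \phi(gMg^{-1}) = \phi(M),
\]
where the last equality uses normality of $M$. So $\phi(M) \trianglelefteq \AnotherGroup$.

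\textbf{Step 3: equality.} By the maximality remark, $\phi(M) \subseteq M$. Applying the same argument to the automorphism $\phi^{-1}$ gives $\phi^{-1}(M) \subseteq M$, so $M \subseteq \phi(M)$. Hence $\phi(M) = M$. Alternatively, finiteness together with $\abs{\phi(M)} = \abs{M}$ gives the same conclusion.

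There is no real obstacle. The only point needing care is that the containment remark depends on $MN$ being a $\dimension$-group, which holds because $\abs{MN} = \abs{M}\abs{N}/\abs{M\cap N}$ in the finite case. We take that remark as given from the preceding text.
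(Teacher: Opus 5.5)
Your proof is correct and follows the paper's route: the images of $O_\dimension(\AnotherGroup)$ under $\phi$ and $\phi^{-1}$ are normal $\dimension$-subgroups, so both lie in $O_\dimension(\AnotherGroup)$, which gives equality. The only difference is that you spell out the order-preservation and normality checks, which the paper leaves implicit.
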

\begin{proof}
Let $\phi\in \operatorname{Aut}(\AnotherGroup)$. The images of $O_\dimension(\AnotherGroup)$ under
$\phi$ and $\phi^{-1}$ are normal $\dimension$-groups, so $\phi(O_\dimension(\AnotherGroup)) \subseteq O_\dimension(\AnotherGroup)$ and $\phi^{-1}(O_\dimension(\AnotherGroup)) \subseteq O_\dimension(\AnotherGroup)
\implies O_\dimension(\AnotherGroup)\subseteq\phi(O_\dimension(\AnotherGroup))$. Therefore, $\phi(O_\dimension(\AnotherGroup))=O_\dimension(\AnotherGroup)$.
\end{proof}
For the remainder of this section, fix $U\in\Cliffords_4$ and write
\[
\AnotherGroup:=\ConjugationGroup_2(U)= \left\langle H_{x,y}:x,y\in\PauliLabelSpace\right\rangle.
\]
The idea now is to show the generators $G_x \Paulis G_x^\dagger$ of $\AnotherGroup$ are in the $\dimension$-core. We first show $G_x$ induces an automorphism and then show the Paulis are in the $\dimension$-core. 
\begin{lemma}
\label[lemma]{lem:Qa-automorphism}
    For each $x\in \PauliLabelSpace$, conjugation by $G_x = U P_x U^\dagger$ induces an automorphism on $\AnotherGroup$.
\end{lemma}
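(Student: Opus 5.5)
The plan is to show that conjugation by $G_x$ maps the generating set of $\AnotherGroup$ into $\AnotherGroup$. Once that holds, $\phi_x \colon K \mapsto G_x K G_x^\dagger$ is an injective homomorphism $\AnotherGroup \to \AnotherGroup$. Since $\AnotherGroup \subseteq \Cliffords_2$ is finite (its generators $H_{x,y}$ lie in $\Cliffords_{\level-2} = \Cliffords_2$), injectivity gives bijectivity. Alternatively, $G_x^{\dagger} = G_{-x}$ up to phase, so $\phi_{-x}$ is an inverse of $\phi_x$.

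The key computation is on a generator $H_{z,y} = G_z P_y G_z^\dagger$:
\[
G_x H_{z,y} G_x^\dagger = (G_x G_z) P_y (G_x G_z)^\dagger .
\]
Because $\ConjugationGroup_1(U) = U\Paulis U^\dagger$ is a group isomorphic to $\Paulis$, we have $G_x G_z \propto G_{x+z}$, with the phase being irrelevant modulo phase. Hence
\[
G_x H_{z,y} G_x^\dagger = G_{x+z} P_y G_{x+z}^\dagger = H_{x+z,y},
\]
which is again a generator of $\AnotherGroup$. Thus conjugation by $G_x$ permutes the generating set $\set{H_{z,y}}$: the map $(z,y)\mapsto(x+z,y)$ is a bijection. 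It therefore maps $\AnotherGroup$ onto itself.

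Conjugation is always a homomorphism on the ambient unitary group modulo phase, so its restriction to $\AnotherGroup$ is a surjective homomorphism $\AnotherGroup\to\AnotherGroup$. It is injective because conjugation is invertible. This gives an automorphism.

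I expect no real obstacle here. The only point needing care is that $G_x$ itself need not lie in $\AnotherGroup$, since $G_x\in\Cliffords_3$ only. The automorphism is therefore "outer" in general, and we must verify closure on generators rather than appeal to normality. The phase bookkeeping in $G_xG_z\propto G_{x+z}$ must also be checked, and it is harmless since everything is taken modulo phase.
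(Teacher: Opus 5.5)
Your proposal is correct and matches the paper's proof: the paper makes the same computation $G_x H_{z,y} G_x^\dagger = (G_xG_z)P_y(G_xG_z)^\dagger = G_{x+z}P_yG_{x+z}^\dagger = H_{x+z,y}$, so conjugation by $G_x$ permutes the generators of $\AnotherGroup$. Your remarks on bijectivity, via finiteness or via the inverse map given by $G_{-x}$, spell out what the paper leaves implicit.
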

\begin{proof}
Conjugation by $G_x = U P_x U^\dagger$ permutes the generating set of $\AnotherGroup$:
\begin{align*}
G_xH_{y,z}G_x^{\dagger}
&=G_xG_yP_z(G_x G_y)^\dagger\\
&= G_{x+y}P_zG_{x+y}^\dagger\\
&=H_{x+y,z}.\qedhere
\end{align*}
\end{proof}
\begin{theorem}
\label{thm:H_d-group}
For $U\in \Cliffords_4$, the second conjugation group
\[
\AnotherGroup = \ConjugationGroup_2(U) = \left\langle H_{x, y} = (UP_xU^\dagger) P_y(UP_xU^\dagger)^\dagger:
x,y\in\PauliLabelSpace\right\rangle\subseteq\Cliffords_2
\]
is a finite $\dimension$-group.
\end{theorem}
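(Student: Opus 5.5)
The plan is to show that $\AnotherGroup = O_\dimension(\AnotherGroup)$, i.e.\ that every generator $H_{x,y}$ lies in the $\dimension$-core. First, finiteness and the containment in $\Cliffords_2$. Since $U\in\Cliffords_4$, each $G_x = UP_xU^\dagger$ lies in $\Cliffords_3$. Hence each $H_{x,y} = G_xP_yG_x^\dagger$ lies in $\Cliffords_2$. As $\Cliffords_2$ is a group, $\AnotherGroup\subseteq\Cliffords_2$. Modulo phase, $\Cliffords_2$ is finite: an element is determined up to phase by its action on $\Paulis$, i.e.\ by a symplectic map on $\PauliLabelSpace$ together with finitely many phases, and a unitary commuting with all Paulis is a scalar. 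So $\AnotherGroup$ is a finite group, and $O_\dimension(\AnotherGroup)$ is defined.

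Second, I show $\Paulis\le O_\dimension(\AnotherGroup)$. Taking $x=0$ gives $G_0 = I$, hence $H_{0,y}=P_y$ and $\Paulis\le\AnotherGroup$. Since $\AnotherGroup\subseteq\Cliffords_2$ and $\Cliffords_2$ normalizes $\Paulis$, we get $\Paulis\trianglelefteq\AnotherGroup$. Moreover $\abs{\Paulis}=\dimension^{2\qudits}$, so $\Paulis$ is a normal $\dimension$-subgroup of $\AnotherGroup$. By the remark following the definition of the $\dimension$-core, it is therefore contained in $O_\dimension(\AnotherGroup)$.

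Third, I transport this to all generators. By \cref{lem:Qa-automorphism}, conjugation by $G_x$ is an automorphism $\phi_x$ of $\AnotherGroup$, even though $G_x$ itself need not lie in $\AnotherGroup$. Since $O_\dimension(\AnotherGroup)$ is characteristic, $\phi_x(O_\dimension(\AnotherGroup)) = O_\dimension(\AnotherGroup)$. Hence
\[
H_{x,y}=\phi_x(P_y)\in\phi_x(O_\dimension(\AnotherGroup)) = O_\dimension(\AnotherGroup)
\]
for all $x,y$. As the $H_{x,y}$ generate $\AnotherGroup$, this gives $\AnotherGroup=O_\dimension(\AnotherGroup)$, which is a $\dimension$-group.

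The main subtlety is the role of \cref{lem:Qa-automorphism}. Normality alone only handles conjugation by elements of $\AnotherGroup$, so we need $G_x$ to induce an honest automorphism of $\AnotherGroup$; that is exactly what the lemma provides, via $G_xG_y\propto G_{x+y}$. I will also check that working modulo phase is harmless throughout. The identity $G_xG_y \propto G_{x+y}$ holds up to phase, and phases disappear under conjugation. The orders $\abs{\Paulis}=\dimension^{2\qudits}$ and the notion of a $\dimension$-group are likewise taken in the quotient by phase, consistent with the paper's conventions.
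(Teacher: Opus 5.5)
Your proposal is correct and follows the paper's proof essentially step for step. You use $H_{0,y}=P_y$ and $\AnotherGroup\le\Cliffords_2$ to get $\Paulis\trianglelefteq\AnotherGroup$, hence $\Paulis\le O_\dimension(\AnotherGroup)$; then, since conjugation by $G_x$ is an automorphism of $\AnotherGroup$ (\cref{lem:Qa-automorphism}) and preserves the characteristic subgroup $O_\dimension(\AnotherGroup)$, every generator $H_{x,y}=G_xP_yG_x^\dagger$ lies in the $\dimension$-core. Your explicit argument that $\Cliffords_2$ is finite modulo phase, so that $O_\dimension(\AnotherGroup)$ is well defined and the group is finite, is a welcome addition that the paper leaves implicit.
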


\begin{proof}
First, observe that $\Paulis \subseteq \AnotherGroup$, because $H_{0,y}=P_y$ for all $y$. Since $\AnotherGroup\leq\Cliffords_2$ and $\Paulis\trianglelefteq \Cliffords_2$, it follows that $\Paulis\trianglelefteq \AnotherGroup$.

Let $O_\dimension(\AnotherGroup)$ be the $\dimension$-core of $\AnotherGroup$. Since the projective Pauli group
is a normal $\dimension$-subgroup, $\Paulis\le O_\dimension(\AnotherGroup)$. Moreover, since $O_\dimension(\AnotherGroup)$ is characteristic in $\AnotherGroup$, we have \[G_xO_\dimension(\AnotherGroup)G_x^\dagger=O_\dimension(\AnotherGroup).\]

Finally, $\Paulis \subset O_\dimension(\AnotherGroup)$ implies 
\[
H_{x,y}=G_xP_yG_x^\dagger\in O_\dimension(\AnotherGroup)
\]
for every $x,y\in \PauliLabelSpace$. As $\AnotherGroup = \langle H_{x, y} : x, y \in \PauliLabelSpace\rangle$,
\[
\AnotherGroup\le O_\dimension(\AnotherGroup);
\]
the reverse inclusion is automatic, so $\AnotherGroup=O_\dimension(\AnotherGroup)$ and therefore $\AnotherGroup$ is a $\dimension$-group.
\end{proof}

\subsection{Proof of \texorpdfstring{\cref{thm:level4}}{Theorem \ref{thm:level4}}}
We now prove our main theorem.

\begin{theorem}[\cref{thm:level4}, restated]
Let $\qudits \in \mathbb{N}$ and $\level\leq 4$ and $\dimension$ be prime. Every $\qudits$-qudit gate $U\in\Cliffords_\level$ with qudits of dimension $\dimension$ is generalized semi-Clifford. 
\end{theorem}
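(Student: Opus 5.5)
Since $\Cliffords_1\subset\Cliffords_2\subset\Cliffords_3\subset\Cliffords_4$, it suffices to treat $U\in\Cliffords_4$; every Clifford is trivially generalized semi-Clifford, and level $3$ was handled in the previous section. The plan is a two-stage fixed-point argument. Stage one uses $\AnotherGroup=\ConjugationGroup_2(U)$ to cut down the set of Lagrangians. Stage two uses $\Group=\ConjugationGroup_1(U)$ on what is left. Then we finish exactly as in the level-$3$ proof.

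\textbf{Stage one.} By \cref{thm:H_d-group}, $\AnotherGroup\le\Cliffords_2$ is a finite $\dimension$-group. It acts on $\Lagrangians$ by conjugation of the algebras $\Algebra_L$, since Cliffords map Pauli MASAs to Pauli MASAs. By \cref{cor:nonempty-fixed-point-set}, the fixed-point set $\FixedLagrangians=\Lagrangians^\AnotherGroup$ has $\abs{\FixedLagrangians}\equiv 1\pmod\dimension$.

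\textbf{Stage two.} Next we let $\Group=\ConjugationGroup_1(U)=\set{G_x}$ act on $\FixedLagrangians$. Each $G_x$ lies only in $\Cliffords_3$, not in $\Cliffords_2$, so it is not clear a priori that $G_x\Algebra_L G_x^\dagger$ is a Pauli MASA. This is the main obstacle. Fix $L\in\FixedLagrangians$ and set $\MatrixSubspace=G_x\Algebra_LG_x^\dagger$, an abelian algebra of dimension $\Dimension$. For any $y$ we compute
\[
P_y\MatrixSubspace P_y^\dagger = G_x (G_x^\dagger P_y G_x)\Algebra_L (G_x^\dagger P_y G_x)^\dagger G_x^\dagger .
\]
Now $G_x^\dagger P_y G_x = G_{-x}P_yG_{-x}^\dagger = H_{-x,y}\in\AnotherGroup$, using that $G_x^\dagger\propto G_{-x}$ modulo phase. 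Since $L$ is $\AnotherGroup$-fixed, the middle factor equals $\Algebra_L$, so $P_y\MatrixSubspace P_y^\dagger=\MatrixSubspace$. By \cref{cor:pauli-masa}, $\MatrixSubspace=\Algebra_{L'}$ for some $L'\in\Lagrangians$, so $G_x$ maps $\Lagrangians$ to $\Lagrangians$ on the relevant points.

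We must also check that $L'\in\FixedLagrangians$. For this we use \cref{lem:Qa-automorphism}: $G_x$ normalizes $\AnotherGroup$. Hence for $H\in\AnotherGroup$,
\[
H\Algebra_{L'}H^\dagger = G_x (G_x^\dagger H G_x)\Algebra_L (G_x^\dagger H G_x)^\dagger G_x^\dagger = \Algebra_{L'},
\]
because $G_x^\dagger H G_x\in\AnotherGroup$ fixes $\Algebra_L$. So $\Group$ acts on $\FixedLagrangians$; this is a genuine group action because $x\mapsto G_x$ is a homomorphism modulo phase. Since $\Group\cong\Paulis$ is a $\dimension$-group, \cref{cor:nonempty-fixed-point-set} gives some $L\in\FixedLagrangians^\Group$.

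\textbf{Conclusion.} With $L\in\FixedLagrangians^\Group$, repeat the computation from the level-$3$ proof. Put $\Algebra=U^\dagger\Algebra_LU$. Then
\[
P_x\Algebra P_x^\dagger=U^\dagger G_x\Algebra_L G_x^\dagger U=\Algebra,
\]
so \cref{cor:pauli-masa} gives $\Algebra=\Algebra_S$ for some $S\in\Lagrangians$, and therefore $U\Algebra_SU^\dagger=\Algebra_L$. The delicate step is stage two, namely showing that $G_x\in\Cliffords_3$ preserves the set of $\AnotherGroup$-fixed Pauli MASAs. Everything else is bookkeeping with phases, where $G_{-x}$ agrees with $G_x^\dagger$ only up to a phase, and phases are harmless under conjugation.
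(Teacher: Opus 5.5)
Your proposal is correct and follows the paper's proof step for step: the $\dimension$-group $\ConjugationGroup_2(U)$ from \cref{thm:H_d-group} cuts $\Lagrangians$ down to a fixed set $\FixedLagrangians$ with $\abs{\FixedLagrangians}\equiv 1 \pmod{\dimension}$, then $\ConjugationGroup_1(U)$ acts on $\FixedLagrangians$ via $G_x^\dagger P_y G_x \propto H_{-x,y}$, \cref{cor:pauli-masa} and \cref{lem:Qa-automorphism}, and the common fixed point is pulled back through $U$ exactly as at level 3. Your remark that $x\mapsto G_x$ is a homomorphism modulo phase, so the induced map is a genuine group action, makes explicit a detail the paper leaves implicit.
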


\begin{proof}
Let $U\in \Cliffords_4$.  Denote the first conjugation group 
\[
\Group:=\ConjugationGroup_1(U) = \set{G_x=UP_xU^\dagger: x\in \PauliLabelSpace}
\]
and the second conjugation group
\[
\AnotherGroup := \ConjugationGroup_2(U) = \left\langle H_{x,y}=G_xP_yG_x^{\dagger}: x,y\in \PauliLabelSpace\right\rangle.
\]
$\Group \cong \Paulis$ is a $\dimension$-group since $\abs{\Group}=\abs{\Paulis}=d^{2n}$, and $\AnotherGroup$ is a
$\dimension$-group by \cref{thm:H_d-group}.

Let $\Lagrangians$ be the set of Lagrangian subspaces of $\PauliLabelSpace$, and denote the Pauli MASA generated by $L\in \Lagrangians$ by 
\[
\Algebra_L=\lspan\set{P_x: x\in L}.
\] 
For any $L\in \Lagrangians$ and any $C\in \Cliffords_2$,
\[
C\Algebra_LC^\dagger=\Algebra_S \quad\text{for some } S\in \Lagrangians. 
\]
Thus, conjugation of $\Algebra_L$ induces an action of the $\dimension$-group
$\AnotherGroup\subseteq \Cliffords_2$ on $\Lagrangians$. Let
\[
\FixedLagrangians:=
\set{L\in\Lagrangians:H\Algebra_LH^\dagger=\Algebra_L
\text{ for every }H\in \AnotherGroup}
\]
be the fixed-point set of the action. By \cref{cor:nonempty-fixed-point-set}, $\abs{\FixedLagrangians} \equiv 1 \pmod{\dimension}$, and thus $\FixedLagrangians \neq \varnothing$.

Now we claim conjugation of $\Algebra_L$ by $G_x\in \Group$ induces an action of $\Group$ on $\FixedLagrangians$. Fix $L\in\FixedLagrangians$ and set
\[
\Algebra_x:=G_x\Algebra_LG_x^\dagger.
\]
Conjugation by $G_x$ is an action as long as, for all $x \in \PauliLabelSpace$, $\Algebra_x = \Algebra_{L_x}$ for some $L_x \in \FixedLagrangians$ (i.e., $\Algebra_x$ is a Pauli MASA fixed by $\AnotherGroup$-conjugation); we next prove this fact.

First, we show $\Algebra_x$ is a Pauli MASA. For every Pauli $P_y$, using
$G_x^\dagger\propto G_{-x}$ and hence
$G_x^\dagger P_yG_x \propto H_{-x,y}\in \AnotherGroup$, we have
\[
\begin{aligned}
P_y\Algebra_xP_y^\dagger
&=G_x(G_x^\dagger P_yG_x)\Algebra_L
   (G_x^\dagger P_y^\dagger G_x)G_x^\dagger\\
&=G_xH_{-x,y}\Algebra_LH_{-x,y}^\dagger G_x^\dagger\\
&=G_x\Algebra_LG_x^\dagger \\
&=\Algebra_x,
\end{aligned}
\]
where the second-to-last equality is due to $L\in \FixedLagrangians$. 
Since $\Algebra_x$ is an abelian subalgebra of dimension $\Dimension$, \cref{cor:pauli-masa} implies that $\Algebra_x$ is a Pauli MASA.
That is, there is a unique $L_x\in\Lagrangians$ such that
\[
\Algebra_x=\Algebra_{L_x}.
\]

Now we show $L_x \in \FixedLagrangians$, i.e. $\Algebra_x$ is fixed by $\AnotherGroup$. For any $H\in \AnotherGroup$,
\[
\begin{aligned}
H\Algebra_{L_x}H^\dagger
&=H(G_x\Algebra_LG_x^\dagger)H^\dagger\\
&=G_x(G_x^\dagger HG_x)\Algebra_L (G_x^\dagger H^\dagger G_x)G_x^\dagger\\
&=G_x \Algebra_L G_x^\dagger\\
&=\Algebra_{L_x},
\end{aligned}
\]
where the second-to-last equality follows from \cref{lem:Qa-automorphism}, which yields $G_x^\dagger HG_x\in \AnotherGroup$ (and from $\Algebra_L$ being fixed by $\AnotherGroup$ due to $L \in \FixedLagrangians$). 
Therefore, $L_x\in\FixedLagrangians$.

Since the $\dimension$-group $\Group$ acts on $\FixedLagrangians$, \cref{cor:nonempty-fixed-point-set} yields a fixed point of the action, i.e., $S\in\FixedLagrangians$ such that
\[
G_x\Algebra_SG_x^\dagger=\Algebra_S
\qquad\text{for every }x\in \PauliLabelSpace.
\]
Finally, set
\[
\Algebra:=U^\dagger\Algebra_SU.
\]
For every Pauli $P_x$,
\[
\begin{aligned}
P_x\Algebra P_x^\dagger
&=P_xU^\dagger \Algebra_S U P_x^\dagger\\
&=U^\dagger (UP_xU^\dagger) \Algebra_S (UP_x^\dagger U^\dagger)U\\
&=U^\dagger G_x\Algebra_SG_x^\dagger U\\
&=U^\dagger\Algebra_SU\\
&=\Algebra.
\end{aligned}
\]
By \cref{cor:pauli-masa}, $\Algebra$ is a Pauli MASA, so $\Algebra=\Algebra_T$ for some $T\in \Lagrangians$, and
\[
U\Algebra_TU^\dagger=\Algebra_S.
\]
Therefore, $U$ is generalized semi-Clifford. 

\end{proof}

\section{Beyond Level \texorpdfstring{$4$}{4}}
Recent work by de Silva and Lautsch constructs a counterexample to the generalized semi-Clifford conjecture at level $\level = 5$ \cite{desilva2026}. Nevertheless, understanding which gates fall into this category and which do not remains useful. Here we give a sufficient condition for any gate to be generalized semi-Clifford. The main idea of this condition is to conceptually extend the proof for $\Cliffords_4$ to higher levels of the hierarchy. We do so by building a chain of $\dimension$-groups, namely the conjugation groups, such that each group normalizes the following group. Thus we can take advantage of the fixed-point properties of $\dimension$-groups and find a fixed Pauli MASA.

Let $U\in\Cliffords_\level$, write $G_x:=UP_xU^\dagger$, and abbreviate the conjugation groups by
\[
\ConjugationGroup_j:=\ConjugationGroup_j(U).
\]
Since $\ConjugationGroup_1=U\Paulis U^\dagger\cong\Paulis$, $\ConjugationGroup_1$
is a finite $\dimension$-group.

\begin{lemma}
\label{lem:successive-normalization}
For every $2\leq j\leq \level-2$, conjugation by every $V\in\ConjugationGroup_{j-1}$ induces an automorphism of  $\ConjugationGroup_j$.
\end{lemma}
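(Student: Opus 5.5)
We prove the statement by induction on $j$, with the same mechanism as \cref{lem:Qa-automorphism}: conjugation by an element of $\ConjugationGroup_{j-1}$ should permute a generating set of $\ConjugationGroup_j$. It suffices to check this for a generating set of $\ConjugationGroup_{j-1}$. Conjugation by a fixed unitary $V$ is an injective homomorphism of $\Unitaries$. If $V\ConjugationGroup_jV^\dagger\subseteq\ConjugationGroup_j$ and $V^\dagger\ConjugationGroup_jV\subseteq\ConjugationGroup_j$ for all generators $V$ of $\ConjugationGroup_{j-1}$, then the same inclusions hold for all of $\ConjugationGroup_{j-1}$. In particular conjugation by $V$ maps $\ConjugationGroup_j$ onto itself, giving an automorphism.

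\textbf{Base case $j=2$.} Here $\ConjugationGroup_1=\set{G_x}$ and $G_xG_y\propto G_{x+y}$, because $P_xP_y\propto P_{x+y}$. Then the computation of \cref{lem:Qa-automorphism} goes through verbatim for any level $\level$:
\[
G_xH_{y,z}G_x^\dagger = H_{x+y,z}.
\]
This step does not use $U\in\Cliffords_4$, so it covers $j=2$. The inverse $G_x^\dagger\propto G_{-x}$ is handled by the same identity.

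\textbf{Inductive step.} Assume conjugation by every element of $\ConjugationGroup_{j-2}$ preserves $\ConjugationGroup_{j-1}$, and take $j\le\level-2$. The group $\ConjugationGroup_j$ is generated by $WP_yW^\dagger$ with $W\in\ConjugationGroup_{j-1}$. Fix $V\in\ConjugationGroup_{j-1}$; then
\[
V(WP_yW^\dagger)V^\dagger=(VW)P_y(VW)^\dagger .
\]
Since $\ConjugationGroup_{j-1}$ is a group, $VW\in\ConjugationGroup_{j-1}$, so the right-hand side is again a generator of $\ConjugationGroup_j$. Applying the same argument to $V^{-1}$ shows that conjugation by $V$ permutes the generating set $\set{WP_yW^\dagger}$, and hence gives an automorphism of $\ConjugationGroup_j$.

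This argument uses only that $\ConjugationGroup_{j-1}$ is closed under multiplication, which is true by definition because it is defined as a generated group. The exception is $\ConjugationGroup_1$, which is a set, but it is closed up to phase, and phases are quotiented out. So the induction hypothesis is not actually needed. The hypothesis $j\le\level-2$ presumably matters only later, for $\ConjugationGroup_j\subseteq\Cliffords_2$.

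\textbf{Main point of care.} The step needing attention is the bookkeeping for $j=2$, where $\ConjugationGroup_1$ is defined as a set rather than as a generated group. It must be checked that this set is closed modulo phase: $\ConjugationGroup_1=U\Paulis U^\dagger$ is the image of the group $\Paulis$ under conjugation, so it is a group. With that, the general argument applies uniformly to every $j\ge 2$.
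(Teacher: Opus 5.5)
Your proof is correct and follows the paper's argument: the paper also uses $V(WP_yW^\dagger)V^\dagger=(VW)P_y(VW)^\dagger$ with $VW\in\ConjugationGroup_{j-1}$, so conjugation by $V$ maps the generating set of $\ConjugationGroup_j$ into itself, and then applies the same reasoning to $V^\dagger$ to get equality. As you note yourself, the induction and the separate base case are unnecessary, since only closure of $\ConjugationGroup_{j-1}$ under multiplication is used and the bound $j\le\level-2$ plays no role in this lemma.
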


\begin{proof}
Conjugation by $V\in\ConjugationGroup_{j-1}$ permutes the generating set of $\ConjugationGroup_j$:
\[
V(WP_yW^\dagger)V^\dagger
=(VW)P_y(VW)^\dagger\in\ConjugationGroup_j
\]
for every $W\in\ConjugationGroup_{j-1}$ and $y\in\PauliLabelSpace$. 
Applying the same argument to $V^\dagger$ proves equality, so conjugation by $V$ induces an automorphism of $\ConjugationGroup_j$.
\end{proof}

\begin{lemma}
\label{lem:conjugation-groups-normal}
For every $2\leq j\leq \level-3$,
\[
\ConjugationGroup_{j+1}\trianglelefteq\ConjugationGroup_j.
\]
\end{lemma}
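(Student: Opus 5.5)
The plan is to prove the containment $\ConjugationGroup_{j+1}\le\ConjugationGroup_j$ first and then get normality directly from \cref{lem:successive-normalization}, applied with index $j+1$.

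\emph{Step 1: each $\ConjugationGroup_i$ is a group that contains $\Paulis$ for $i\ge 2$.} For $i\ge 2$, $\ConjugationGroup_i$ is defined as a generated subgroup, so it is a group. I will show by induction that the identity lies in $\ConjugationGroup_i$ for every $i\ge 1$. For the base case, $I = G_0 = UP_0U^\dagger\in\ConjugationGroup_1$. For the inductive step, if $I\in\ConjugationGroup_{i-1}$ then $I P_0 I^\dagger = I\in\ConjugationGroup_i$. Hence for $i\ge 2$ we have $P_y = I P_y I^\dagger\in\ConjugationGroup_i$ for all $y$, i.e.\ $\Paulis\subseteq\ConjugationGroup_i$. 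This generalizes the observation $H_{0,y}=P_y$ used in the proof of \cref{thm:H_d-group}.

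\emph{Step 2: $\ConjugationGroup_{j+1}\le\ConjugationGroup_j$ for $j\ge 2$.} A generator of $\ConjugationGroup_{j+1}$ has the form $VP_yV^\dagger$ with $V\in\ConjugationGroup_j$. By Step 1, $P_y\in\ConjugationGroup_j$, and $\ConjugationGroup_j$ is closed under products and inverses. Therefore $VP_yV^\dagger\in\ConjugationGroup_j$. Since every generator of $\ConjugationGroup_{j+1}$ lies in $\ConjugationGroup_j$, the whole group $\ConjugationGroup_{j+1}$ lies in $\ConjugationGroup_j$.

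\emph{Step 3: normality.} Take $V\in\ConjugationGroup_j$. Since $2\le j+1\le\level-2$, \cref{lem:successive-normalization} (with index $j+1$) states that conjugation by $V$ is an automorphism of $\ConjugationGroup_{j+1}$, so $V\ConjugationGroup_{j+1}V^\dagger=\ConjugationGroup_{j+1}$. Equivalently, one can redo the computation directly: for $W\in\ConjugationGroup_j$,
\[
V(WP_yW^\dagger)V^\dagger=(VW)P_y(VW)^\dagger ,
\]
and $VW\in\ConjugationGroup_j$ because $\ConjugationGroup_j$ is a group. So conjugation by $V$ maps the generating set of $\ConjugationGroup_{j+1}$ into itself, and applying the same argument to $V^\dagger$ gives equality. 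Combined with Step 2, this gives $\ConjugationGroup_{j+1}\trianglelefteq\ConjugationGroup_j$.

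The only delicate point is Step 1: the inclusion $\ConjugationGroup_{j+1}\subseteq\ConjugationGroup_j$ needs both that $\ConjugationGroup_j$ is closed under multiplication and that it contains the Paulis. This is why the statement starts at $j=2$ rather than $j=1$. The first conjugation group $\ConjugationGroup_1=U\Paulis U^\dagger$ is a group, but it need not contain $\Paulis$, so $\ConjugationGroup_2\subseteq\ConjugationGroup_1$ can fail. Everything else is the same generator-permutation bookkeeping as in \cref{lem:Qa-automorphism}.
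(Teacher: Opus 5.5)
Your proof is correct and matches the paper's argument step for step. You use $I\in\ConjugationGroup_{j-1}$ to get $\Paulis\le\ConjugationGroup_j$, deduce $\ConjugationGroup_{j+1}\le\ConjugationGroup_j$, and obtain normality from the identity $W(VP_yV^\dagger)W^\dagger=(WV)P_y(WV)^\dagger$; your explicit induction for $I\in\ConjugationGroup_i$ and the alternative of citing \cref{lem:successive-normalization} at index $j+1$ are minor elaborations of the same reasoning.
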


\begin{proof}
Since $I\in\ConjugationGroup_{j-1}$, the definition of $\ConjugationGroup_j$ gives
\[
\Paulis\leq\ConjugationGroup_j.
\]
Consequently, for $V\in\ConjugationGroup_j$ and $y\in\PauliLabelSpace$,
\[
VP_yV^\dagger\in\ConjugationGroup_j,
\]
and hence $\ConjugationGroup_{j+1}\leq\ConjugationGroup_j$.

Moreover, conjugation by $W\in\ConjugationGroup_j$ permutes the generators of
$\ConjugationGroup_{j+1}$:
\[
W(VP_yV^\dagger)W^\dagger
=(WV)P_y(WV)^\dagger\in\ConjugationGroup_{j+1}.
\]
Thus $\ConjugationGroup_{j+1}\trianglelefteq\ConjugationGroup_j$.
\end{proof}

\begin{theorem}
\label{thm:all-conjugation-groups-d-groups}
Let $U\in\Cliffords_\level$, where $\level\geq4$. If
\[
\ConjugationGroup_{\level-2}\subseteq\Cliffords_2,
\]
then, for every $1\leq j\leq \level-2$,
\[
\ConjugationGroup_j\subseteq\Cliffords_{\level-j},
\]
and $\ConjugationGroup_j$ is a finite $\dimension$-group.
\end{theorem}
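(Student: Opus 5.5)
The plan is to prove the two claims separately: first the containments $\ConjugationGroup_j\subseteq\Cliffords_{\level-j}$, then the $\dimension$-group property by descending induction on $j$, mirroring the proof of \cref{thm:H_d-group}.

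\textbf{Containments.} Since $\ConjugationGroup_1=U\Paulis U^\dagger\subseteq\Cliffords_{\level-1}$ by the definition of $\Cliffords_\level$, it suffices to show $\ConjugationGroup_j\subseteq\Cliffords_{\level-j}$ for $2\le j\le\level-2$. The hypothesis gives the case $j=\level-2$. By \cref{lem:conjugation-groups-normal} the chain is decreasing, $\ConjugationGroup_{j+1}\le\ConjugationGroup_j$ for $2\le j\le\level-3$, but that inclusion goes the wrong way for transferring membership. Instead I would use the generators. Take $V\in\ConjugationGroup_j$ and $x\in\PauliLabelSpace$. Then $VP_xV^\dagger$ is a generator of $\ConjugationGroup_{j+1}$, and by a downward induction starting at $\ConjugationGroup_{\level-2}\subseteq\Cliffords_2$ it lies in $\Cliffords_{\level-j-1}$. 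Hence $V\in\Cliffords_{\level-j}$ directly from the definition of the hierarchy. This argument works for every element $V$ of the group, not just for generators, which matters because $\Cliffords_{\level-j}$ is not closed under products. So the step is: assume $\ConjugationGroup_{j+1}\subseteq\Cliffords_{\level-j-1}$, note that $VP_xV^\dagger\in\ConjugationGroup_{j+1}$ for every $V\in\ConjugationGroup_j$, and conclude $\ConjugationGroup_j\subseteq\Cliffords_{\level-j}$.

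\textbf{$\dimension$-group property.} The case $j=1$ is immediate since $\ConjugationGroup_1\cong\Paulis$. For $j\ge2$ I would again induct downward from $j=\level-2$. In the base case, $\ConjugationGroup_{\level-2}\le\Cliffords_2$ is finite because $\Cliffords_2$ modulo phase is finite. It contains $\Paulis$, since $I\in\ConjugationGroup_{\level-3}$; in the case $\level=4$ this is $\ConjugationGroup_1\ni I$. So $\Paulis\trianglelefteq\ConjugationGroup_{\level-2}$ and $\Paulis\le O_\dimension(\ConjugationGroup_{\level-2})$. By \cref{lem:successive-normalization}, each $V\in\ConjugationGroup_{\level-3}$ induces an automorphism of $\ConjugationGroup_{\level-2}$. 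The $\dimension$-core is characteristic, so $VP_yV^\dagger\in O_\dimension(\ConjugationGroup_{\level-2})$. These elements generate $\ConjugationGroup_{\level-2}$, hence $\ConjugationGroup_{\level-2}=O_\dimension(\ConjugationGroup_{\level-2})$ is a $\dimension$-group. This is exactly the argument of \cref{thm:H_d-group}.

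For the inductive step with $2\le j<\level-2$, assume $\ConjugationGroup_{j+1}$ is a finite $\dimension$-group. By \cref{lem:conjugation-groups-normal}, $\ConjugationGroup_{j+1}\trianglelefteq\ConjugationGroup_j$. I would argue that the quotient $\ConjugationGroup_j/\ConjugationGroup_{j+1}$ is a $\dimension$-group generated by the images of the generators $WP_yW^\dagger$ with $W\in\ConjugationGroup_{j-1}$. Each $P_y$ already lies in $\ConjugationGroup_{j+1}$, so each conjugate $WP_yW^\dagger$ lies in $W\ConjugationGroup_{j+1}W^\dagger$.

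\textbf{Main obstacle.} The key claim is $W\ConjugationGroup_{j+1}W^\dagger=\ConjugationGroup_{j+1}$ for all $W\in\ConjugationGroup_{j-1}$. This does hold: the generators of $\ConjugationGroup_{j+1}$ are $VP_zV^\dagger$ with $V\in\ConjugationGroup_j$, and by \cref{lem:successive-normalization} we have $WVW^\dagger\in\ConjugationGroup_j$. Then
\[
W(VP_zV^\dagger)W^\dagger=(WVW^\dagger)(WP_zW^\dagger)(WVW^\dagger)^\dagger,
\]
where $WP_zW^\dagger\in\ConjugationGroup_j$, so the result lies in $\ConjugationGroup_j$. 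Here I get stuck: this only shows the element lies in $\ConjugationGroup_j$, not in $\ConjugationGroup_{j+1}$. So this step is incomplete as written.

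A cleaner route is to run the core argument directly. Show that $\ConjugationGroup_j$ is finite, which holds because it sits in a finite group via the containment $\ConjugationGroup_j\subseteq\Cliffords_{\level-j}$ together with generation inside the normalizer structure. Then show that $O_\dimension(\ConjugationGroup_j)\supseteq\Paulis$ is invariant under conjugation by $\ConjugationGroup_{j-1}$, using \cref{lem:successive-normalization} and characteristicity, so it contains all the generators $WP_yW^\dagger$. The genuine difficulty is finiteness, together with the fact that \cref{lem:successive-normalization} requires $j\le\level-2$. I expect finiteness of $\ConjugationGroup_j$ for small $j$ to be the hardest point. My first attempt would be to realize $\ConjugationGroup_j$ as a subgroup of the automorphism group of the finite group $\ConjugationGroup_{j+1}$ modulo a centralizer, and to use that $\Paulis$ has trivial centralizer modulo phase in $\Unitaries$.
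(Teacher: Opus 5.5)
Your containment argument and the base case $j=\level-2$ of the $\dimension$-group induction are correct and match the paper. The inductive step ($2\le j\le\level-3$) is left unfinished, and the missing piece is not the one you single out. Drop the quotient route. The claim $W\ConjugationGroup_{j+1}W^\dagger=\ConjugationGroup_{j+1}$ for all $W\in\ConjugationGroup_{j-1}$ is not just unproved but false in general. Combined with $\Paulis\le\ConjugationGroup_{j+1}$, it would put every generator $WP_yW^\dagger$ of $\ConjugationGroup_j$ into $\ConjugationGroup_{j+1}$ and so force $\ConjugationGroup_j=\ConjugationGroup_{j+1}$.

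Your ``cleaner route'' is the paper's argument, but you assert $\Paulis\le O_\dimension(\ConjugationGroup_j)$ without justification. That is exactly where the inductive hypothesis is needed. Unlike the base case, $\ConjugationGroup_j$ need not lie in $\Cliffords_2$, so $\Paulis$ need not be normal in it. The paper argues as follows. By \cref{lem:conjugation-groups-normal} and the hypothesis, $\ConjugationGroup_{j+1}$ is a normal $\dimension$-subgroup of $\ConjugationGroup_j$. Hence $\ConjugationGroup_{j+1}\le O_\dimension(\ConjugationGroup_j)$, and $\Paulis\le\ConjugationGroup_{j+1}$. Then \cref{lem:successive-normalization} and characteristicity put every $WP_yW^\dagger$ with $W\in\ConjugationGroup_{j-1}$ into $O_\dimension(\ConjugationGroup_j)$, so $\ConjugationGroup_j=O_\dimension(\ConjugationGroup_j)$. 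The lemma applies here since $j\le\level-3$, so your concern about the range $j\le\level-2$ does not arise.

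Finiteness is not a real obstacle, and it does not require a finite \emph{group} containing $\ConjugationGroup_j$. You already proved $\ConjugationGroup_j\subseteq\Cliffords_{\level-j}$, and every level of the projective hierarchy is a finite set; that is all the paper uses. This follows by induction on the level. If $U,U'$ have the same projective images $UP_xU^\dagger$ on the $2\qudits$ generating Paulis, then $U'^\dagger U$ commutes with every Pauli up to phase, hence lies in $\Paulis$. So $\abs{\Cliffords_m}\le\abs{\Cliffords_{m-1}}^{2\qudits}\dimension^{2\qudits}$.

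Your automorphism idea would also work, with one correction. The kernel of $\ConjugationGroup_j\to\operatorname{Aut}(\ConjugationGroup_{j+1})$ consists of elements commuting with $\Paulis$ up to phase. It is therefore contained in $\Paulis$ rather than trivial, since $\Paulis$ is abelian modulo phase. It is still finite, so finiteness descends from $\ConjugationGroup_{\level-2}\le\Cliffords_2$.
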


\begin{proof}
We first establish the hierarchy inclusions. The base case
\[
\ConjugationGroup_{\level-2}\subseteq\Cliffords_2
\]
holds by assumption. Suppose
\[
\ConjugationGroup_{j+1}\subseteq\Cliffords_{\level-j-1}.
\]
For every $V\in\ConjugationGroup_j$ and $y\in\PauliLabelSpace$, the definition of
$\ConjugationGroup_{j+1}$ gives
\[
VP_yV^\dagger\in\ConjugationGroup_{j+1}
\subseteq\Cliffords_{\level-j-1}.
\]
Therefore
\[
V\in\Cliffords_{\level-j},
\]
and hence
\[
\ConjugationGroup_j\subseteq\Cliffords_{\level-j}.
\]
Descending induction gives
\[
\ConjugationGroup_j\subseteq\Cliffords_{\level-j}
\qquad
1\leq j\leq \level-2.
\]
Since every fixed level of the (projective) Clifford hierarchy is finite, every $\ConjugationGroup_j$ is finite.

We now prove that the groups are $\dimension$-groups. First consider
$\ConjugationGroup_{\level-2}$. Since
\[
\ConjugationGroup_{\level-2}\leq\Cliffords_2
\qquad\text{and}\qquad
\Paulis\trianglelefteq\Cliffords_2,
\]
we have
\[
\Paulis\trianglelefteq\ConjugationGroup_{\level-2}.
\]
Let $O_\dimension(\ConjugationGroup_{\level-2})$ be its $\dimension$-core. Since $\Paulis$ is a normal
$\dimension$-subgroup,
\[
\Paulis\leq O_\dimension(\ConjugationGroup_{\level-2}).
\]

By \cref{lem:successive-normalization}, conjugation by every
$V\in\ConjugationGroup_{\level-3}$ induces an automorphism of $\ConjugationGroup_{\level-2}$.
Since the $\dimension$-core is characteristic,
\[
VO_\dimension(\ConjugationGroup_{\level-2})V^\dagger
=
O_\dimension(\ConjugationGroup_{\level-2}).
\]
Thus
\[
VP_yV^\dagger\in O_\dimension(\ConjugationGroup_{\level-2})
\]
for every $V\in\ConjugationGroup_{\level-3}$ and $y\in\PauliLabelSpace$. These elements generate
$\ConjugationGroup_{\level-2}$, so
\[
\ConjugationGroup_{\level-2}\leq O_\dimension(\ConjugationGroup_{\level-2}).
\]
Therefore
\[
\ConjugationGroup_{\level-2}=O_\dimension(\ConjugationGroup_{\level-2}),
\]
and $\ConjugationGroup_{\level-2}$ is a $\dimension$-group.

Now suppose $2\leq j\leq \level-3$ and that $\ConjugationGroup_{j+1}$ is a
$\dimension$-group. By \cref{lem:conjugation-groups-normal},
\[
\ConjugationGroup_{j+1}\trianglelefteq\ConjugationGroup_j.
\]
Therefore
\[
\ConjugationGroup_{j+1}\leq O_\dimension(\ConjugationGroup_j).
\]
Since $\Paulis\leq\ConjugationGroup_{j+1}$, it follows that
\[
\Paulis\leq O_\dimension(\ConjugationGroup_j).
\]

By \cref{lem:successive-normalization}, every
$V\in\ConjugationGroup_{j-1}$ induces an automorphism of $\ConjugationGroup_j$, so it
preserves $O_\dimension(\ConjugationGroup_j)$. Hence
\[
VP_yV^\dagger\in O_\dimension(\ConjugationGroup_j)
\]
for every $V\in\ConjugationGroup_{j-1}$ and $y\in\PauliLabelSpace$. These elements generate
$\ConjugationGroup_j$, and therefore
\[
\ConjugationGroup_j=O_\dimension(\ConjugationGroup_j).
\]
Thus $\ConjugationGroup_j$ is a $\dimension$-group.

Descending induction proves that
\[
\ConjugationGroup_2,\ldots,\ConjugationGroup_{\level-2}
\]
are finite $\dimension$-groups. Finally,
\[
\ConjugationGroup_1=U\Paulis U^\dagger\cong\Paulis
\]
is automatically a finite $\dimension$-group.
\end{proof}

\begin{remark}
The assumption
\[
\ConjugationGroup_{\level-2}\subseteq\Cliffords_2
\]
implies both
\[
\ConjugationGroup_j\subseteq\Cliffords_{\level-j}
\qquad
1\leq j\leq \level-2
\]
and the nested normal chain
\[
\ConjugationGroup_{\level-2}
\trianglelefteq
\ConjugationGroup_{\level-3}
\trianglelefteq\cdots\trianglelefteq
\ConjugationGroup_2.
\]
The first group $\ConjugationGroup_1$ need not contain $\ConjugationGroup_2$, but it
normalizes $\ConjugationGroup_2$ by
\cref{lem:successive-normalization}.

For example, when $\level=5$,
\[
\ConjugationGroup_3\subseteq\Cliffords_2
\]
implies
\[
\ConjugationGroup_2\subseteq\Cliffords_3,
\qquad
\ConjugationGroup_1\subseteq\Cliffords_4,
\]
and all three groups $\ConjugationGroup_1,\ConjugationGroup_2,\ConjugationGroup_3$ are finite
$\dimension$-groups.
\end{remark}

\begin{theorem}
\label{thm:conditional-higher-gsc}
Let $U\in\Cliffords_\level$, where $\level\geq4$, and suppose that $\ConjugationGroup_{\level-2}\subseteq\Cliffords_2$.
Then $U$ is generalized semi-Clifford.
\end{theorem}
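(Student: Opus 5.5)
The plan is to iterate the fixed-point argument of the $\Cliffords_4$ proof down the chain of conjugation groups, from $\ConjugationGroup_{\level-2}$ up to $\ConjugationGroup_1$.

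By \cref{thm:all-conjugation-groups-d-groups}, every $\ConjugationGroup_j$ with $1\le j\le \level-2$ is a finite $\dimension$-group. Since $\ConjugationGroup_{\level-2}\subseteq\Cliffords_2$, it acts by conjugation on $\Lagrangians$. Set $\FixedLagrangians_{\level-2}\coloneqq\Lagrangians^{\ConjugationGroup_{\level-2}}$. Then \cref{cor:nonempty-fixed-point-set} gives $\abs{\FixedLagrangians_{\level-2}}\equiv1\pmod\dimension$.

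Inductively, for $j=\level-3,\dots,1$, I will show that $\ConjugationGroup_j$ acts by conjugation on $\FixedLagrangians_{j+1}$. I will then let $\FixedLagrangians_j\coloneqq\FixedLagrangians_{j+1}^{\ConjugationGroup_j}$, whose size is again $\equiv1\pmod\dimension$ by \cref{lem:d-group-action-size}.

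The key step mirrors the level-4 proof. Fix $V\in\ConjugationGroup_j$ and $L\in\FixedLagrangians_{j+1}$, and put $\Algebra_V\coloneqq V\Algebra_L V^\dagger$.

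\emph{Pauli invariance.} For any $P_y$ we have $V^\dagger P_y V\in\ConjugationGroup_{j+1}$, because $V^\dagger\in\ConjugationGroup_j$ and $(V^\dagger)P_y(V^\dagger)^\dagger$ is a generator of $\ConjugationGroup_{j+1}$. This element fixes $\Algebra_L$. Hence
\[
P_y\Algebra_V P_y^\dagger=V(V^\dagger P_yV)\Algebra_L(V^\dagger P_yV)^\dagger V^\dagger=\Algebra_V,
\]
and \cref{cor:pauli-masa} shows that $\Algebra_V=\Algebra_{L'}$ for some $L'\in\Lagrangians$.

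\emph{Invariance under the deeper groups.} Next I need $L'\in\FixedLagrangians_{j+1}$, i.e.\ $\Algebra_{L'}$ must be fixed by every $\ConjugationGroup_i$ with $i\ge j+1$, since $\FixedLagrangians_{j+1}$ is the set of Lagrangians fixed by all of $\ConjugationGroup_{\level-2},\dots,\ConjugationGroup_{j+1}$. For $H\in\ConjugationGroup_i$ we have $H\Algebra_{L'}H^\dagger=V(V^\dagger HV)\Algebra_L(V^\dagger HV)^\dagger V^\dagger$. So it suffices that $V^\dagger HV\in\ConjugationGroup_i$, i.e.\ that $\ConjugationGroup_j$ normalizes $\ConjugationGroup_i$ for all $i>j$.

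Showing this normalization is the \textbf{main obstacle}. For $i=j+1$ it is exactly \cref{lem:successive-normalization}. For $i>j+1$ I would use the nesting from \cref{lem:conjugation-groups-normal}, namely $\ConjugationGroup_{j+1}\le\ConjugationGroup_j$ for $j\ge2$. The argument goes by induction on $i$. Conjugation by $V\in\ConjugationGroup_j$ maps a generator $WP_yW^\dagger$ of $\ConjugationGroup_i$, with $W\in\ConjugationGroup_{i-1}$, to $(VWV^\dagger)(VP_yV^\dagger)(VWV^\dagger)^\dagger$. By induction $VWV^\dagger\in\ConjugationGroup_{i-1}$, but $VP_yV^\dagger$ is no longer a Pauli. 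Instead it lies in $\ConjugationGroup_{j+1}\le\ConjugationGroup_{i}$ only when $i\le j+1$, which fails for $i>j+1$.

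To get around this, I would weaken what is needed. It may suffice to fix only by $\ConjugationGroup_{j+1}$ at each stage, provided the chain is nested. For $j\ge2$, the nesting $\ConjugationGroup_{\level-2}\le\dots\le\ConjugationGroup_{j+1}$ means that being fixed by $\ConjugationGroup_{j+1}$ already implies being fixed by all deeper groups. So define $\FixedLagrangians_{j+1}$ simply as $\Lagrangians^{\ConjugationGroup_{j+1}}$, which is nonempty and $\equiv1\pmod\dimension$ by \cref{cor:nonempty-fixed-point-set} directly, since $\ConjugationGroup_{j+1}\subseteq\Cliffords_2$ acts on $\Lagrangians$ because it sits inside $\ConjugationGroup_{\level-2}$... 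That inclusion has the wrong direction, so I would double-check it.

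The fallback is to note that only the last step matters. With $j=1$, the level-4 argument needs just three things: $\ConjugationGroup_2$ acts on $\Lagrangians$, $\ConjugationGroup_2$ is a $\dimension$-group, and $\ConjugationGroup_1$ normalizes $\ConjugationGroup_2$. The middle groups enter only to ensure that $\ConjugationGroup_2$ consists of Cliffords. So I would try to show that elements of $\ConjugationGroup_2$ map Pauli MASAs fixed by $\ConjugationGroup_{\level-2}$ to Pauli MASAs, by the same Pauli-invariance trick applied successively down the chain.

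Once some $S$ has $\Algebra_S$ fixed by all of $\ConjugationGroup_1$, the final computation of the level-4 proof applies. It gives $P_x(U^\dagger\Algebra_SU)P_x^\dagger=U^\dagger\Algebra_SU$, so \cref{cor:pauli-masa} yields $U^\dagger\Algebra_SU=\Algebra_T$ for some $T\in\Lagrangians$, and $U$ is generalized semi-Clifford.
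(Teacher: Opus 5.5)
Your proposal leaves the key inductive step open, and the workarounds you sketch contain invalid moves. You define $\FixedLagrangians_{j+1}$ as the Lagrangians fixed by all of $\ConjugationGroup_{j+1},\dots,\ConjugationGroup_{\level-2}$. You then correctly see that the direct argument would need $\ConjugationGroup_j$ to normalize $\ConjugationGroup_i$ for $i>j+1$. Nothing in the paper provides this, and you leave it unresolved.

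Your first rescue tries to get $\abs{\Lagrangians^{\ConjugationGroup_{j+1}}}\equiv1\pmod{\dimension}$ straight from \cref{cor:nonempty-fixed-point-set}. That needs $\ConjugationGroup_{j+1}\subseteq\Cliffords_2$. As you half-notice, the inclusion runs the other way ($\ConjugationGroup_{\level-2}\le\ConjugationGroup_{j+1}$). In fact $\ConjugationGroup_{j+1}$ is only known to lie in $\Cliffords_{\level-j-1}$, so nothing guarantees it acts on $\Lagrangians$.

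The fallback fails for the same reason. If $\ConjugationGroup_2\subseteq\Cliffords_2$, every $G_x$ would lie in $\Cliffords_3$, so $U\in\Cliffords_4$. Hence for a genuinely higher-level gate, $\ConjugationGroup_2$ does not consist of Cliffords. Moreover, the Pauli-invariance trick for $V\in\ConjugationGroup_2$ requires $\Algebra_L$ to be fixed by the elements $V^\dagger P_yV\in\ConjugationGroup_3$. So fixation by $\ConjugationGroup_{\level-2}$ alone is not enough once $\level\ge6$.

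The missing observation is that fixation propagates to deeper groups for free. If $\Algebra_L$ is fixed by $\ConjugationGroup_j$, then every generator $VP_yV^\dagger$ of $\ConjugationGroup_{j+1}$ fixes it, since $V^\dagger$, $P_y$ and $V$ each do (Paulis fix every $\Algebra_L$). The paper therefore defines $\FixedLagrangians_j$ using $\ConjugationGroup_j$ alone and records $\FixedLagrangians_j\subseteq\FixedLagrangians_{j+1}$. Besides your Pauli-invariance computation, it then needs only \cref{lem:successive-normalization} to show that $\ConjugationGroup_{j-1}$ acts on $\FixedLagrangians_j$, with fixed-point set exactly $\FixedLagrangians_{j-1}$.

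The congruence $\abs{\FixedLagrangians_{j-1}}\equiv\abs{\FixedLagrangians_j}\pmod{\dimension}$ comes from \cref{lem:d-group-action-size} inside the descending induction started at $\FixedLagrangians_{\level-2}$. It never comes from applying the corollary to a non-Clifford group acting on all of $\Lagrangians$.

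Your own nesting remark would have worked equally well had you kept your original induction, with the count for $\FixedLagrangians_{j+1}$ inherited from the previous step. By \cref{lem:conjugation-groups-normal}, $\ConjugationGroup_{\level-2}\le\cdots\le\ConjugationGroup_2$, and this covers $j=1$ too, not just $j\ge2$. So ``fixed by all deeper groups'' is the same as ``fixed by $\ConjugationGroup_{j+1}$''. The only normalization then required is $\ConjugationGroup_j$ normalizing $\ConjugationGroup_{j+1}$, which you already have.
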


\begin{proof}
By \cref{thm:all-conjugation-groups-d-groups}, every
$\ConjugationGroup_j$ is a finite $\dimension$-group. Let $\Lagrangians$ be the finite set of Lagrangian subspaces of $\PauliLabelSpace$, and
write
\[
\Algebra_L=\lspan\set{P_x:x\in L}.
\]
Since $\ConjugationGroup_{\level-2}\subseteq\Cliffords_2$, conjugation induces an action of
$\ConjugationGroup_{\level-2}$ on $\Lagrangians$. Define
\[
\FixedLagrangians_{\level-2}
:=
\left\{
L\in\Lagrangians:
W\Algebra_LW^\dagger=\Algebra_L
\text{ for every }W\in\ConjugationGroup_{\level-2}
\right\}.
\]
Every nontrivial orbit has cardinality divisible by $\dimension$, so
\[
|\FixedLagrangians_{\level-2}|
\equiv|\Lagrangians|
\equiv1\pmod{\dimension}.
\]
Thus $\FixedLagrangians_{\level-2}$ is nonempty and has cardinality congruent to
$1\pmod{\dimension}$.

For $1\leq j<\level-2$, define
\[
\FixedLagrangians_j
:=
\left\{
L\in\Lagrangians:
V\Algebra_LV^\dagger=\Algebra_L
\text{ for every }V\in\ConjugationGroup_j
\right\}.
\]
We first observe that
\[
\FixedLagrangians_j\subseteq\FixedLagrangians_{j+1}.
\]
Indeed, if $L\in\FixedLagrangians_j$, then every generator
$W=VP_yV^\dagger$ of $\ConjugationGroup_{j+1}$ satisfies
\[
\begin{aligned}
W\Algebra_LW^\dagger
&=VP_y(V^\dagger\Algebra_LV)P_y^\dagger V^\dagger\\
&=VP_y\Algebra_LP_y^\dagger V^\dagger\\
&=V\Algebra_LV^\dagger\\
&=\Algebra_L.
\end{aligned}
\]
Here we used $V^\dagger\Algebra_LV=\Algebra_L$, as $L\in\FixedLagrangians_j$ and $\ConjugationGroup_j$ is a group, together with $P_y\Algebra_LP_y^\dagger=\Algebra_L$, since conjugation by a Pauli sends each $P_x$ to a scalar multiple of itself and hence preserves $\Algebra_L$.

Now fix $2\leq j\leq \level-2$. We claim that $\ConjugationGroup_{j-1}$ acts on
$\FixedLagrangians_j$. Let $L\in\FixedLagrangians_j$, $V\in\ConjugationGroup_{j-1}$, and set
\[
\Algebra_V:=V\Algebra_LV^\dagger.
\]
For every Pauli $P_y$, we have
\[
V^\dagger P_yV\in\ConjugationGroup_j,
\]
and hence
\[
\begin{aligned}
P_y\Algebra_VP_y^\dagger
&=V(V^\dagger P_yV)\Algebra_L
  (V^\dagger P_y^\dagger V)V^\dagger\\
&=V\Algebra_LV^\dagger\\
&=\Algebra_V.
\end{aligned}
\]
By \cref{cor:pauli-masa}, $\Algebra_V$ is a Pauli MASA.

Furthermore, for every $W\in\ConjugationGroup_j$,
\[
\begin{aligned}
W\Algebra_VW^\dagger
&=V(V^\dagger WV)\Algebra_L
  (V^\dagger W^\dagger V)V^\dagger\\
&=V\Algebra_LV^\dagger\\
&=\Algebra_V,
\end{aligned}
\]
because $V^\dagger WV\in\ConjugationGroup_j$ by
\cref{lem:successive-normalization}. Thus $\ConjugationGroup_{j-1}$ acts on
$\FixedLagrangians_j$.

The fixed-point set of this action is precisely $\FixedLagrangians_{j-1}$, since
$\FixedLagrangians_{j-1}\subseteq\FixedLagrangians_j$. Because $\ConjugationGroup_{j-1}$ is a $\dimension$-group,
\[
|\FixedLagrangians_{j-1}|\equiv|\FixedLagrangians_j|\pmod{\dimension}.
\]
Starting from $\FixedLagrangians_{\level-2}$ and descending inductively gives
\[
|\FixedLagrangians_1|\equiv1\pmod{\dimension}.
\]
Therefore there exists $L\in\Lagrangians$ such that
\[
G_x\Algebra_LG_x^\dagger=\Algebra_L
\qquad\text{for every }x\in\PauliLabelSpace.
\]

Finally, set
\[
\Algebra:=U^\dagger\Algebra_LU.
\]
For every Pauli $P_x$,
\[
\begin{aligned}
P_x\Algebra P_x^\dagger
&=U^\dagger G_x\Algebra_LG_x^\dagger U\\
&=U^\dagger\Algebra_LU\\
&=\Algebra.
\end{aligned}
\]
By \cref{cor:pauli-masa}, $\Algebra$ is a Pauli MASA. Thus
$\Algebra=\Algebra_S$ for some $S\in\Lagrangians$, and
\[
U\Algebra_SU^\dagger=\Algebra_L.
\]
Therefore $U$ is generalized semi-Clifford.
\end{proof}

\section*{Acknowledgements} Subramanian acknowledges support from the Royal Society through a University Research Fellowship. 

\section*{AI Disclosure}

We used OpenAI Codex to assist with mathematical exploration, including developing proof strategies, checking intermediate algebraic arguments, and, alongside Claude Code, to identify relevant literature and draft text. The tools materially influenced the development and exposition of arguments throughout the manuscript. All AI-assisted material was reviewed and substantially edited by the authors. The authors independently verified the mathematical claims, proofs, calculations, and references and take full responsibility for the correctness and originality of the manuscript.

\bibliographystyle{alpha}
\bibliography{refs}

\appendix
\crefalias{section}{appendix}

\section{Deferred proofs}
\label{sec:deferred-proofs}
\begin{lemma}[\cref{lem:lagrangian-subspaces}, restated]
The number of Lagrangian subspaces of $\PauliLabelSpace$ is
\[
\prod_{j=1}^n(d^j+1).
\]
\end{lemma}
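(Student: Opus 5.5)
We count Lagrangians by double counting pairs (isotropic subspace, nested isotropic subspace), or equivalently by counting ordered bases. The plan: count ordered tuples $(v_1,\dots,v_\qudits)$ of linearly independent, pairwise orthogonal vectors in $\PauliLabelSpace$ (i.e., $[v_i,v_j]=0$ for all $i,j$), then divide by the number of ordered bases of a fixed $\qudits$-dimensional space, $\abs{GL_\qudits(\Field_\dimension)} = \prod_{j=0}^{\qudits-1}(\dimension^\qudits-\dimension^j)$. Every such tuple spans an isotropic subspace of dimension $\qudits$, which by the remark after the definition of Lagrangians is exactly a Lagrangian. Conversely, each Lagrangian arises from exactly $\abs{GL_\qudits(\Field_\dimension)}$ tuples.

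The key step is counting the choices of $v_{i+1}$ given $v_1,\dots,v_i$ spanning an isotropic $W$ of dimension $i$. We need $v_{i+1}\in W^\perp\setminus W$. Since the symplectic form is nondegenerate, $\dim W^\perp = 2\qudits - i$, and $W\subseteq W^\perp$ because $W$ is isotropic. Note also that $[v,v]=0$ for every $v$: the form is alternating, since $x_1\cdot x_2 - x_2\cdot x_1=0$, and this holds in odd characteristic and for $\dimension=2$ alike. Hence every vector of $W^\perp\setminus W$ is admissible, and the number of choices is $\dimension^{2\qudits-i}-\dimension^{i}$. The number of tuples is therefore
\[
\prod_{i=0}^{\qudits-1}\left(\dimension^{2\qudits-i}-\dimension^{i}\right)=\prod_{i=0}^{\qudits-1}\dimension^{i}\left(\dimension^{2(\qudits-i)}-1\right).
\]

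Dividing by $\prod_{i=0}^{\qudits-1}\dimension^{i}(\dimension^{\qudits-i}-1)$ and using $\dimension^{2m}-1=(\dimension^m-1)(\dimension^m+1)$ with $m=\qudits-i$ leaves $\prod_{m=1}^{\qudits}(\dimension^m+1)$. The powers of $\dimension$ cancel exactly.

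The only step needing care, and the one I would check first, is justifying $\dim W^\perp = 2\qudits-\dim W$ together with the alternating property in characteristic $2$. The first follows because $W^\perp$ is the kernel of the map $v\mapsto [\cdot,v]|_W$ into $W^*$, which is surjective by nondegeneracy. The remaining steps are routine algebra.
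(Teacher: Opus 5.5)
Your proof is correct. The paper does not derive the count itself. It cites Gross's formula $\prod_{i=0}^{m-1}\frac{d^{2(n-i)}-1}{d^{m-i}-1}$ for the number of $m$-dimensional isotropic subspaces of $\mathbb{F}_d^{2n}$, sets $m=n$, and cancels using $d^{2k}-1=(d^k-1)(d^k+1)$.

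Your argument is the standard derivation of that cited formula, specialized to $m=n$. You count ordered isotropic bases vector by vector via $v_{i+1}\in W^\perp\setminus W$ and divide by $|GL_n(\mathbb{F}_d)|$, and your final simplification is the same as the paper's.

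What your route buys is self-containedness. You explicitly check the two facts the count rests on: the form is alternating even when $d=2$, and $\dim W^\perp=2n-\dim W$ by nondegeneracy. You also justify that every Lagrangian has exactly $|GL_n(\mathbb{F}_d)|$ ordered bases, all of which are admissible tuples. The paper leaves all of this to the reference.

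The cited formula covers every isotropic dimension $m$, but your argument recovers it verbatim using $m$-tuples and $|GL_m(\mathbb{F}_d)|$. For the paper's only use of the lemma, namely $\prod_{j}(d^j+1)\equiv 1 \pmod d$ in the fixed-point corollary, the $m=n$ case you prove is all that is needed.
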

\begin{proof}
\cite{Gross2006} gives the number of isotropic subspaces of dimension $m$ in $\Field_\dimension^{2\qudits}$ as 
\[
\prod_{i=0}^{m-1}\frac{\dimension^{2(\qudits-i)}-1}{\dimension^{m-i}-1}\enspace.
\]
Since Lagrangians correspond to the subspaces with $m=\qudits$, the number of Lagrangian subspaces of $\PauliLabelSpace$ is 

\begin{equation*}
\prod_{i=0}^{\qudits-1} (\dimension^{n-i}+1) =
\prod_{j=1}^{\qudits} (\dimension^{j}+1)\qedhere
\end{equation*}
\end{proof}

\begin{lemma}[\cref{lem:pauli-invariance}, restated.]
Let $\MatrixSubspace \subseteq M_{\Dimension
}(\Complex)$ be a linear subspace satisfying
\[
P_x \MatrixSubspace P_x^\dagger=\MatrixSubspace
\]
for all $x \in \Field_\dimension^{2\qudits}$. Then $\MatrixSubspace = \lspan_\Complex\Subset$ with $\Subset \subseteq \Paulis$. 
\end{lemma}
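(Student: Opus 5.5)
The plan is to use the fact that the Paulis $\set{P_z : z \in \PauliLabelSpace}$ form a basis of $M_\Dimension(\Complex)$ that simultaneously diagonalizes the conjugation action of $\Paulis$. Define the commuting linear operators $\Phi_x \colon M \mapsto P_x M P_x^\dagger$ on $M_\Dimension(\Complex)$, for $x \in \PauliLabelSpace$. Since $P_x P_z P_x^\dagger = \RootOfUnity^{[x,z]} P_z$, each basis element $P_z$ is a joint eigenvector of the family $\set{\Phi_x}$, with eigenvalue $\RootOfUnity^{[x,z]}$ for $\Phi_x$. By hypothesis, $\MatrixSubspace$ is invariant under every $\Phi_x$.

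The first step is to decompose $M_\Dimension(\Complex)$ into joint eigenspaces. The character $x \mapsto \RootOfUnity^{[x,z]}$ determines $z$, because the symplectic form is non-degenerate: if $[x,z]=[x,z']$ for all $x$, then $z=z'$. Consequently, the joint eigenspaces are exactly the one-dimensional lines $\Complex P_z$. Concretely, define the projector
\[
\Pi_z \coloneqq \frac{1}{\Dimension^2}\sum_{x\in\PauliLabelSpace}\RootOfUnity^{-[x,z]}\Phi_x .
\]
Character orthogonality on the abelian group $\Field_\dimension^{2\qudits}$ gives $\Pi_z(P_w)=\delta_{z,w}P_w$. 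So $\Pi_z$ is the projection onto $\Complex P_z$ along the remaining Pauli basis elements.

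The second step uses invariance. Because $\MatrixSubspace$ is invariant under every $\Phi_x$, it is also invariant under every $\Pi_z$, which is a linear combination of the $\Phi_x$. Now take any $M \in \MatrixSubspace$ and expand it as $M=\sum_z c_z P_z$. Then $M=\sum_z \Pi_z M$, and each summand $\Pi_z M = c_z P_z$ lies in $\MatrixSubspace$. Whenever $c_z\neq 0$, this puts $P_z$ itself in $\MatrixSubspace$.

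Finally, set $\Subset \coloneqq \set{P_z : P_z \in \MatrixSubspace}$. The previous step shows $\MatrixSubspace \subseteq \lspan_\Complex \Subset$, and the reverse inclusion holds trivially, which proves the lemma. I expect no real obstacle here. The only points that need care are that non-degeneracy makes the characters distinct, which is what makes the eigenspaces one-dimensional, and that the phase ambiguity in $P_z$, which is defined only up to $\langle\RootOfUnity\rangle$, affects neither the spans nor the action.
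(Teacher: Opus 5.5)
Your proposal is correct and follows essentially the same route as the paper. Both define the character-sum projector $\Pi_z$ onto $\Complex P_z$ (your weight $\omega^{-[x,z]}$ equals the paper's $\omega^{[z,x]}$), use Pauli invariance of $\MatrixSubspace$ to get $\Pi_z(\MatrixSubspace)\subseteq\MatrixSubspace$, and then conclude from the Pauli expansion of each element; your explicit appeal to non-degeneracy of the symplectic form for the orthogonality relation is a detail the paper leaves implicit.
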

\begin{proof}
For any $x\in \PauliLabelSpace$, the projector $\Pi_x$ onto $\lspan_\Complex\set{P_x}$ can be written as
\[
\Pi_x(M):=\frac{1}{\abs{\PauliLabelSpace}}\sum_{y\in \PauliLabelSpace}
\RootOfUnity^{[x,y]}P_y M P_y^\dagger.
\]
Indeed, 
\[
\begin{aligned}
\Pi_x(P_z)
=\frac{1}{\dimension^{2\qudits}}\sum_{y\in \PauliLabelSpace}\RootOfUnity^{[x,y]}P_yP_zP_y^\dagger  &=\frac{1}{\dimension^{2\qudits}}\sum_{y\in \PauliLabelSpace}\RootOfUnity^{[y,z - x]}P_z\\
&=\begin{cases}
P_z &\text{ if } z=x,\\
0 & \text{otherwise,}
\end{cases}
\end{aligned}
\]

Since $\MatrixSubspace$ is invariant under every Pauli conjugation,
$W_y:=P_y W P_y^\dagger\in \MatrixSubspace$ for any $W\in \MatrixSubspace$, and thus $\Pi_x(W) \in \MatrixSubspace$ by closure under linear combinations. Consequently, if $\Pi_x(W) \propto P_x$ is nonzero for some $W\in \MatrixSubspace$, then $P_x\in \MatrixSubspace$.

Now, let
\[
S=\set{x: \exists W \in \MatrixSubspace \text{ such that } \Pi_x(W)\neq 0};
\] then 
\[
\MatrixSubspace=\lspan\set{P_x : x \in S},
\]
as we prove next. The containment $\lspan\set{P_x : x \in S}\subseteq \MatrixSubspace$ follows from $P_x \in \MatrixSubspace$ when $\Pi_x(W) \neq 0$, as shown above.

Since every operator admits a Pauli decomposition,
\[
W=\sum_{x\in T} \alpha_x P_x 
\]
for some $T\subset \PauliLabelSpace$. Since $\Pi_x(W)\neq 0$ if $x\in T$, we have $T \subseteq S$ and thus $W\in \lspan\set{P_x : x \in S}$. As $W\in\MatrixSubspace$ was arbitrary, $\MatrixSubspace \subseteq \lspan\set{P_x : x \in S}$.
\end{proof}

\end{document}